\documentclass[review]{elsarticle}


\usepackage[utf8]{inputenc}
\usepackage[T1]{fontenc}
\usepackage[english]{babel}
\usepackage{amsmath}
\usepackage{amsfonts}
\usepackage{amssymb}
\usepackage{enumitem}
\usepackage{booktabs}
\usepackage{xfrac}
\usepackage{csquotes}
\usepackage{tabularx}
\usepackage{graphicx}
\usepackage{multirow}
\usepackage[usenames,dvipsnames]{xcolor} 

\usepackage[normalem]{ulem} 

\usepackage{pgfplotstable}
\pgfplotsset{compat=1.17}
\usepackage{booktabs}
\usepackage{float}
\usepackage{changepage}
\usepackage{caption}
\usepackage[pdfencoding=auto, psdextra, hidelinks]{hyperref}
\allowdisplaybreaks 
\usepackage{amsthm}
\theoremstyle{definition}
\newtheorem{example}{Example}
\newtheorem{definition}{Definition}

\theoremstyle{plain}
\newtheorem{theorem}{Theorem}

\newtheorem{proposition}{Proposition}
\newtheorem{corollary}{Corollary}


\usepackage{setspace}
\onehalfspacing






\bibliographystyle{model5-names}\biboptions{authoryear}



\begin{document}

\begin{frontmatter}

\title{A pessimist's approach to one-sided matching\\}

\author[KULeuvenAddress,UGentAddress]{Tom Demeulemeester\corref{mycorrespondingauthor}}
\ead{tom.demeulemeester@kuleuven.be}
\author[UGentAddress,FlandersMake]{Dries Goossens}
\author[KULeuvenAddress]{Ben Hermans}
\author[KULeuvenAddress]{Roel Leus}

\cortext[mycorrespondingauthor]{Corresponding author}

\address[KULeuvenAddress]{Research Center for Operations Research \& Statistics, KU Leuven, Belgium}
\address[UGentAddress]{Department of Business Informatics and Operations Management, Ghent University, Belgium}
\address[FlandersMake]{Corelab CVAMO, FlandersMake@UGent, Belgium}

\begin{abstract}
Inspired by real-world applications such as the assignment of pupils to schools or the allocation of social housing, the one-sided matching problem studies how a set of agents can be assigned to a set of objects when the agents have preferences over the objects, but not vice versa. For fairness reasons, most mechanisms use randomness, and therefore result in a probabilistic assignment. We study the problem of decomposing these probabilistic assignments into a weighted sum of ex-post \mbox{(Pareto-)efficient} matchings, while maximizing the worst-case number of assigned agents. This decomposition preserves all the assignments' desirable properties, most notably strategy-proofness. Next to discussing the complexity of the problem, we obtain tight lower and upper bounds on the optimal worst-case number of assigned agents. Moreover, we propose two alternative column generation frameworks for the introduced problem, which prove to be capable of finding decompositions with the theoretically best possible worst-case number of assigned agents, both for randomly generated data, and for real-world school choice data from the Belgian cities Antwerp and Ghent. Lastly, the proposed column generation frameworks are inherently flexible, and can therefore also be applied to settings where other ex-post criteria are desirable, or to find decompositions that satisfy other worst-case measures.
\end{abstract}

\begin{keyword}
Assignment\sep Ex-post Pareto-efficiency\sep Probabilistic assignment\sep Random Serial Dictatorship\sep Probabilistic Serial mechanism
\end{keyword}

\end{frontmatter}

\section{Introduction}
\label{sec:intro}
We study assignment problems in which a set of \textit{agents} have preferences over a set of indivisible \textit{objects}, but not vice versa, while the agents can be assigned to at most one object. This problem models a wide range of applications, ranging from the assignment of pupils to schools to the allocation of social housing (see \cite{biro2017applications} and \cite{elster1992local} for an overview of applications). In this type of problem, which is generally referred to as bipartite matching with one-sided preferences or (Capacitated) House Allocation, fairness plays a crucial role, and monetary transfers are generally not allowed. Most mechanisms used in practice adopt randomness to ensure that all agents are treated equally. As a result, we obtain a \textit{probabilistic assignment} that determines for each agent-object pair the probability of being assigned to each other.

In order to implement this probabilistic assignment, however, we need to decompose it into a set of corresponding \textit{matchings}, which state for each agent-object pair whether they are assigned to each other or not. We will focus on two aspects of finding such a decomposition that have a great practical appeal to decision makers, but that have not yet been studied extensively.

First of all, we make the realistic assumption that decision makers are risk-averse. This implies that they pessimistically fear the worst-case outcome, and want to avoid ending up with an \enquote{unlucky} random draw that results, for example, in a matching that assigns exceptionally few agents to an object. We therefore analyze the problem of maximizing the worst-case number of assigned agents by all matchings in the decomposition of a probabilistic assignment. A second natural requirement is that all matchings in the decomposition of a probabilistic assignment be desirable on their own. A crucial property in the context of one-sided matching is \textit{ex-post (Pareto-)efficiency}: a matching is ex-post efficient if there is no other matching that is at least as preferred by all agents and strictly more preferred by at least one agent. If agents want to exchange their assigned objects, for example, this matching is not ex-post efficient.

Importantly, we do not want to harm the desirable properties, such as \textit{strategy-proofness}, that are related to the probabilistic assignment that is decomposed. We therefore focus on decomposing the original probabilistic assignment exactly, rather than to approximate it (contrary to, e.g., \cite{akbarpour2020approximate}, \cite{bronfman2018redesigning}).

Given a probabilistic assignment that satisfies certain fairness criteria, the well-known \textit{Birkhoff-von Neumann theorem} \citep{birkhoff1946, vonNeumann1953} guarantees that a decomposition over matchings exists. This decomposition, however, is generally not unique, and not all decompositions are equally desirable. To the best of our knowledge, the problem of selecting the most desirable among all feasible decompositions was only studied explicitly in two papers. First, \cite{budish2013} proposed a method to minimize the variance in the utility that is experienced by the agents between the different matchings in the decomposition. Nevertheless, their method is not capable of imposing certain desirable criteria, such as ex-post efficiency, upon the matchings in the decomposition. Second, \cite{kesten2017efficient} require the matchings to be ex-post efficient, but their algorithm does not allow the decision-maker to choose among the possibly many decompositions based on some worst-case measure or on alternative ex-post properties. The frameworks that we propose overcome both limitations.
        
Lastly, designated algorithms have been proposed to obtain decompositions with attractive ex-post properties for specific probabilistic assignments. Most notably, the decomposition of the assignment by the \textit{Probabilistic Serial} (PS) mechanism \citep{bogomolnaiaMoulin2001} has been studied by \cite{freeman2020best} and by \cite{aziz2020simultaneously}. Whereas their algorithms can only be applied to find a decomposition for one specific probabilistic assignment, our solution methods can be applied to any probabilistic assignment, as discussed in more detail in Section 5.4.

We introduce two new problems: (1) the problem of finding a decomposition of a probabilistic assignment~$X$ that maximizes the worst-case number of assigned agents, which we refer to as MD$(X)$, and (2) the same problem with the additional requirement that all matchings in the decomposition be ex-post efficient, which we refer to as MD-SD$(X)$.  

Our contributions are both theoretical and practical. From a theoretical perspective, we propose an algorithm to solve MD$(X)$ in polynomial time, and prove that the optimal value of MD$(X)$ will always be equal to the expected number of assigned agents by $X$, rounded down. When solving MD-SD$(X)$, however, we find that the same algorithm can only be used for a specific class of probabilistic assignments, which includes the well-studied Probabilistic Serial \citep{bogomolnaiaMoulin2001} mechanism, but not the often used \textit{Random Serial Dictatorship} (RSD)~\citep{abdul1998}. For general probabilistic assignments, MD-SD$(X)$ becomes $\mathcal{NP}$-hard. Specifically for the decomposition of the probabilistic assignment by the RSD mechanism, we provide tight lower and upper bounds on the optimal increase in the worst-case number of assigned agents. Lastly, we propose two column generation frameworks to find an optimal decomposition for MD-SD$(X)$, and we discuss how they can be adapted for finding decompositions that optimize alternative worst-case criteria, or that contain matchings satisfying different ex-post properties.

From a practical perspective, we have conducted computational experiments to assess the performance of the proposed methods, and we show that the realized gain in the worst-case number of assigned agents, in comparison to RSD, supports the adoption of our methods in practical applications. Because, to the best of our knowledge, no set of benchmark instances exists for one-sided matching, we have developed a parameterized data generation tool. Next to evaluating the proposed methods on generated instances of different sizes, we also assessed their performance on two real-world school choice instances from the Belgian cities Antwerp and Ghent.

The remainder of this paper is structured as follows. Section~\ref{sec:setup} formally defines one-sided matching. Next, Section~\ref{sec:maximin} introduces the problems MD$(X)$ and MD-SD$(X)$, whereas the complexity and the optimal values of both problems are discussed in Section~\ref{sec:analysis}. Specifically for MD-SD$(X)$, Section~\ref{sec:column_gen} describes the column generation frameworks, while Section~\ref{sec:results} analyzes the performance of the proposed methods on existing and randomly generated one-sided matching instances. Lastly, Section~\ref{sec:conclusion} concludes.

\section{Preliminaries}
\label{sec:setup}
A \textit{one-sided matching problem} is defined by a four-tuple $(N, O, >, q)$, where $N$ is a finite set of agents, and $O$ is a finite set of objects to which the agents in~$N$ want to be assigned. The preference profile $> \;= (>_1, \ldots, >_{|N|})$ contains for each agent~$i \in N$ a strict ordering $>_i$ over the objects in $O \cup \varnothing$, where $\varnothing$ refers to the outside option of not being assigned to any object. We write $j >_i k$ if agent~$i \in N$ prefers object~$j$ to object~$k$, and $\varnothing >_i l$ if she prefers the outside option over object~$l$, with $j, k, l \in O$. Moreover, the capacities $q = (q_1, \ldots, q_{|O|})$ determine for each object~$j \in O$ the maximum number of agents $q_j \in \mathbb{N}$ that can be assigned to it. Denote the set of all one-sided matching instances by $\mathcal{I}$.

A \textit{matching} $M = [m_{ij}]$ is a binary matrix, indexed by all agents and objects, in which $m_{ij} = 1$ if $M$ assigns agent $i \in N$ to object $j \in O$, and $m_{ij} = 0$ otherwise. An alternative way to indicate that agent $i \in N$ is assigned to object $j \in O$ is to state that $M(i) = j$. Similarly, a \textit{probabilistic assignment} $X = [x_{ij}]$, with $x_{ij} \in [0,1]$, can be interpreted as the probabilities with which the agents in~$N$ are assigned to the objects in $O$. A probabilistic assignment is \textit{feasible} if the sum of the allocation probabilities of each agent is  at most one, and the sum of the allocation probabilities of each object does not exceed the object's capacity, i.e.,
\vspace{-0.05cm}
\begin{enumerate}[label=(\roman*)]
	\itemsep 0em
	\item $\sum_{j \in O} x_{ij} \leq 1, \quad \forall \: i \in N$,
	\item $\sum_{i \in N} x_{ij} \leq q_j, \quad \forall \: j \in O$.
\end{enumerate}
Denote the set of all feasible matchings by $\mathcal{M}$, and the set of all feasible probabilistic assignments by $\Delta\mathcal{M}$, with $\mathcal{M} \subseteq \Delta\mathcal{M}$. Moreover, denote the expected number of assigned agents in a probabilistic assignment $X$ by $\mu(X) = \sum_{(i,j) \in N \times O}x_{ij}$. Note that the number of agents that is assigned to an object in matching~$M \in \mathcal{M}$ equals~$\mu(M)$. 

\subsection{Implementing probabilistic assignments}
In most practical applications, one does not make a decision by using the probabilistic assignment directly. Instead, the assignment is typically implemented by a lottery over matchings, in which the weights are chosen such that the expected result is equivalent to the initial probabilistic assignment. A natural question that arises is over which sets of feasible matchings a probabilistic assignment can be implemented in this way. \cite{budish2013} formalized this using the following definition:

\begin{definition}
	\label{def:impl}
	A feasible probabilistic assignment $X \in \Delta \mathcal{M}$ is \emph{implementable over}~$\mathcal{M}'$, where $\mathcal{M}' = \{M^t\}_{t=1}^T \subseteq \mathcal{M}$ is a subset of $T$ feasible matchings, if there exist non-negative numbers $\lambda = \{\lambda^t\}_{t=1}^T$, with $\sum_{t=1}^T \lambda^t = 1$, such that
	\begin{equation*}
	X = \sum_{t=1}^T \lambda^t M^t.
	\end{equation*}
\end{definition}

Given an arbitrary feasible probabilistic assignment $X \in \Delta \mathcal{M}$, the well-known \textit{Birkhoff-von Neumann theorem} \citep{birkhoff1946, budish2013, vonNeumann1953} ensures that there will always exist a subset of feasible matchings over which $X$ can be implemented. This means that each feasible probabilistic assignment $X$ can be written as a convex combination of a set of feasible matchings $\{M^t\}_{t=1}^T \subseteq \mathcal{M}$. In order to implement~$X$, it suffices to randomize over the matchings in this convex combination in such a way that matching $M^t$ is selected as the final matching with a probability equal to its weight $\lambda^t$ in the convex combination. The following definition formalizes the difference between implementing and decomposing a probabilistic assignment:
\begin{definition}
    \label{def:impl_decomp}
    Given a probabilistic assignment $X \in \Delta \mathcal{M}$ that is implementable over $\mathcal{M}' = \{M^t\}_{t=1}^T \subseteq \mathcal{M}$ with weights $\lambda = \{\lambda^t\}_{t=1}^T$,  
    \begin{enumerate}[label=(\roman*), leftmargin=*,labelindent=0pt]
    	\itemsep 0em
        \item an \textit{implementation} of $X$ is an algorithm that randomly selects a single matching $M^t \in \mathcal{M}^\prime$ according to the probability distribution defined by the weights $\lambda$;
        \item a \textit{decomposition} of $X$ is the tuple $(\mathcal{M}', \lambda)$.
    \end{enumerate}
\end{definition}

In other words, an implementation of a probabilistic assignment is an algorithm that implicitly describes an underlying decomposition by randomly selecting each of the matchings with a probability that is equal to the corresponding weight in the decomposition. In most practical applications, therefore, only an implementation of a probabilistic assignment is needed to make a decision, and not its entire decomposition. Nevertheless, having the entire decomposition increases the transparency of the procedure. To illustrate the introduced terminology, consider the following example.

\begin{example} 
	\label{ex:intro}
	Let $N = \{1,2,3,4\}$ be a set of agents and $O = \{a,b,c\}$ be a set of objects. The capacities of the objects are equal to $q = (2,1,1)$. Consider the following preference lists $>_i$ of the agents $i \in N$ over the objects~$j \in O$, and consider the feasible probabilistic assignment~$X^1$ (in which the agents correspond to the rows and the objects to the columns):
	
	\begin{minipage}{0.55 \textwidth}
	\begin{align*}
	\text{agents } 1, 2: & \quad a > b > c > \varnothing\\
	\text{agents } 3, 4: & \quad a > \varnothing > b > c\\
	&
	\end{align*}
	\end{minipage}
	\begin{minipage}{0.45 \textwidth}
	$X^1 = \begin{pmatrix}
	\sfrac{1}{2} & \sfrac{5}{12} & \sfrac{1}{12}\\
	\sfrac{1}{2} & \sfrac{5}{12} & \sfrac{1}{12}\\
	\sfrac{1}{2} & 0 & 0\\
	\sfrac{1}{2} & 0 & 0\\
	\end{pmatrix}$
	\end{minipage}
	
	\noindent Agents~1 and 2 prefer object~$a$ to object~$b$, and object~$b$ to object~$c$, while agents~3 and 4 only prefer object $a$ to the outside option of not being assigned to any object. Moreover, agent~1 has a probability of $\frac{5}{12}$ of being assigned to object~$b$ in $X^1$. We can rewrite $X^1$ in the following way:
	
	\begin{equation*}
	X^1 = \frac{5}{12} \begin{pmatrix}
	0 & 1 & 0\\
	1 & 0 & 0\\
	1 & 0 & 0\\
	0 & 0 & 0\\
	\end{pmatrix}
	+ \frac{5}{12} \begin{pmatrix}
	1 & 0 & 0\\
	0 & 1 & 0\\
	0 & 0 & 0\\
	1 & 0 & 0\\
	\end{pmatrix}
	+ \frac{1}{12} \begin{pmatrix}
	1 & 0 & 0\\
	0 & 0 & 1\\
	1 & 0 & 0\\
	0 & 0 & 0\\
	\end{pmatrix}
	+ \frac{1}{12} \begin{pmatrix}
	0 & 0 & 1\\
	1 & 0 & 0\\
	0 & 0 & 0\\
	1 & 0 & 0\\
	\end{pmatrix}.
	\end{equation*}
	Denote the four matchings in the decomposition of $X^1$ by $M^1, \:M^2, \: M^3$ and $M^4$. Thus, probabilistic assignment $X^1$ is implementable over $\{M^t\}_{t=1}^4$. Observe that $\mu(X^1) = 3$ and that every matching~$M^t$ also assigns~$\mu(M^t) = 3$ agents to an object, for $t=1,\ldots, 4$.
\end{example}

\subsection{Mechanisms and their properties}
In one-sided matching, objects are indifferent about which agents will be assigned to them. Therefore, there may exist ties between agents that have to be broken in order to obtain a matching. Define a \textit{tie-breaking rule}~$\sigma = [\sigma_j]$ for $j \in O$, where $\sigma_j = (\sigma_j(1), \ldots, \sigma_j(|N|))$ is a strict ordering over the agents in $N$. When object~$j$'s capacity is insufficient, its tie-breaking rule $\sigma_j$ is used to decide which agent should be assigned to object~$j$, among all agents who prefer that object equally. Denote the set of all tie-breaking rules by $\Sigma$.

A \textit{deterministic mechanism} $\pi:\mathcal{I} \times \Sigma \mapsto \mathcal{M}$ is a function that returns a feasible matching~$M^{\pi(I, \sigma)} \in \mathcal{M}$ for each one-sided matching instance $I \in \mathcal{I}$, and for each tie-breaking rule $\sigma \in \Sigma$. Similarly, a \textit{probabilistic mechanism} $\psi: \mathcal{I} \mapsto \Delta\mathcal{M}$ returns a feasible probabilistic assignment~$X^{\psi(I)} \in \Delta \mathcal{M}$ for each one-sided matching instance~$I \in \mathcal{I}$. Note that a probabilistic mechanism only returns a probabilistic assignment, and not necessarily its implementation or decomposition. Unless stated differently, the term \enquote{mechanism} will refer to probabilistic mechanisms in the remainder of this paper. Moreover, we will simply refer to $X^{\psi(I)}$ by $X^\psi$ when the instance~$I$ is clear from the context. 

When designing a mechanism, there are several desirable properties one wants to satisfy. An optimality concept that has received broad attention in the context of one-sided matching (e.g., \cite{abdul1998}, \cite{abraham2004pareto}) is \textit{ex-post (Pareto-)efficiency}. A matching~$M\in \mathcal{M}$ is ex-post efficient if there is no other matching~$M'\in \mathcal{M}$ that is at least as preferred as $M$ by all agents and strictly more preferred than~$M$ by at least one agent. The \textit{Serial Dictatorship} (SD) mechanism is a surprisingly simple deterministic mechanism to obtain an ex-post efficient matching. Denote the set of all tie-breaking rules with identical orderings for all objects by $\Sigma' \subseteq \Sigma$. Given a one-sided matching instance~$I \in \mathcal{I}$ and a strict ordering $\sigma \in \Sigma'$ of the agents, the SD mechanism will first assign the first-ranked agent~$\sigma(1)$ to her most preferred object, then the second-ranked agent~$\sigma(2)$ to her most preferred object among the objects with remaining capacity, etc. The resulting matching will be SD$(I, \sigma) = M^{\text{SD}(I, \sigma)}$. The SD mechanism possesses several desirable properties beside ex-post efficiency \citep{saban2015complexity}: it is \textit{neutral} (invariant to relabeling the objects), \textit{nonbossy} (no agent can change another agent's allocation without changing her own), \textit{strategy-proof} (submitting true preferences is optimal for all agents), and easy to compute. Moreover, given a one-sided matching instance $I \in \mathcal{I}$, the SD mechanism can generate any ex-post efficient matching in $I$ \citep{abdul1998, abraham2004pareto}. We will therefore denote the set of all ex-post efficient matchings by $\mathcal{M}^{\text{SD}} \subseteq \mathcal{M}$. 

Despite its desirable properties, the SD mechanism is rarely used in real-world applications, because it is not \textit{anonymous}: two agents with the same preference list might be treated differently, simply because one of them appears before the other one in the ordering of the agents that serves as an input for SD\@. In order to overcome this issue, \cite{abdul1998} introduced the \textit{Random Serial Dictatorship} (RSD) mechanism, which will randomly select an ordering to which the SD mechanism is applied. Applying the RSD mechanism to a one-sided matching instance $I = (N,O,>,q) \in \mathcal{I}$ will result in a probabilistic assignment~$X^{\text{RSD}(I)}$, which is the equally weighted sum of the matchings obtained by the SD mechanism for all different orderings $\sigma \in \Sigma'$:
\begin{equation}
\label{eq:RSD}
X^{\text{RSD}(I)} = \text{RSD}(I) = \frac{1}{|N|!} \sum_{\sigma \in \Sigma'} \text{SD}(I,\sigma).
\end{equation}
In the remainder of the paper, we will simply refer to $X^{\text{RSD}(I)}$ as $X^{\text{RSD}}$ if the instance is clear from the context. Note that the definition of the RSD mechanism is directly linked to an implementation of $X^{\text{RSD}}$ in the sense of Definition~\ref{def:impl_decomp}. In the remainder of this paper, we will use the term \textit{RSD mechanism} to refer to the function that maps an instance~$I$ to the probabilistic assignment $X^{\text{RSD}(I)}$, and the term \textit{RSD algorithm} to refer to the implementation of $X^{\text{RSD}(I)}$ which results in the matching SD$(I,\sigma)$ by randomly selecting a tie-breaking rule $\sigma \in \Sigma'$.

Although the RSD algorithm is widely used for real-world one-sided matching problems, \cite{bogomolnaiaMoulin2001} showed that it fails to satisfy the following notion of efficiency that is stronger than ex-post efficiency.

\begin{definition}
	\label{def:ex-antePE}
	A probabilistic assignment~$X\in \Delta \mathcal{M}$ is \emph{ordinally efficient} if and only if there is no probabilistic assignment $X' \in \Delta \mathcal{M}$ that is at least as preferred as $X$ by all agents and strictly more preferred than $X$ by at least one agent.
\end{definition}

Contrary to the RSD mechanism, the \textit{Probabilistic Serial} (PS) mechanism, that was developed by \cite{bogomolnaiaMoulin2001}, will always result in an ordinally efficient assignment. To obtain this probabilistic assignment, assume that time $t$ runs continuously from 0 to 1. At each point in time, each agent \enquote{eats} with a uniform eating speed of one from her most preferred object that has remaining capacity. At time $t=1$, we obtain the ordinally efficient probabilistic assignment~$X^{\text{PS}}$. Note that an implementation of $X^{\text{PS}}$ is not specified by its construction, in contrast to $X^{\text{RSD}}$.

Next to resulting in an ordinally efficient probabilistic assignment, the PS mechanism satisfies several other desirable properties \citep{bogomolnaiaMoulin2001}. Similarly to the RSD mechanism, PS will be anonymous. Moreover, PS will be \textit{envy-free} (no agent will prefer the assignment probabilities of another agent, for any compatible utility function), while RSD is not. In contrast to the RSD mechanism, however, PS will not be strategy-proof. In fact, \cite{bogomolnaiaMoulin2001} showed that no mechanism can satisfy ordinal efficiency, strategy-proofness and anonymity at the same time. More generally, the extent to which probabilistic assignments can satisfy certain combinations of properties has been widely studied in the literature, leading to an interesting series of impossibility results \citep{athanassoglou2011house,bogomolnaiaMoulin2001, martini2016strategy, nesterov2017fairness, ramezanian2022robust, zhou1990conjecture}.

One crucial observation in the context of this paper is that many relevant properties such as ordinal efficiency, anonymity, and strategy-proofness are inherent to the probabilistic assignment~$X \in \Delta \mathcal{M}$. Hence, all such properties are retained regardless of the decomposition of~$X$. Other properties, such as ex-post efficiency, are defined on the level of the matchings in the decomposition of~$X$. To ensure that a decomposition of~$X$ satisfies a property of the latter category, we thus have to explicitly enforce this property on all matchings in the decomposition.

\section{Maximin decomposition}
\label{sec:maximin}
Although the Birkhoff-von Neumann theorem is an important result, it does not guarantee desirable outcomes in practical applications. From the perspective of an individual agent, a decomposition might be undesirable because one of the matchings in the decomposition is undesirable in itself (Example~\ref{ex:BvN_notPE}). Decision makers might also prefer one decomposition to another because they are risk-averse (Example~\ref{ex:maximin}).

\begin{example}
	\label{ex:BvN_notPE} 
	Reconsider the decomposition of probabilistic assignment~$X^1$ in Example~\ref{ex:intro}. The third matching in the decomposition, $M^3$, is clearly sub-optimal as it is not ex-post efficient. We know that agent~2 prefers object~$b$ to object~$c$. Nevertheless, she is assigned to object~$c$ while object~$b$ has unused capacity. A similar argument holds for agent~1 in $M^4$.
\end{example}

\begin{example}
	\label{ex:maximin}	
	Consider an instance with four agents and two objects. The capacities of the objects are equal to $q=(2,2)$, and the agents' preferences are as indicated below. $X^2$ is a feasible probabilistic assignment for this instance.\\
	\begin{minipage}{0.6 \textwidth}
		\begin{align*}
		\text{agents } 1, 2: & \quad a > b > \varnothing\\
		\text{agents } 3, 4: & \quad a > \varnothing > b\\
		&
		\end{align*}
	\end{minipage}
	\begin{minipage}{0.4 \textwidth}
		$X^2 = \begin{pmatrix}
		\sfrac{1}{2} & \sfrac{1}{2} \\
		\sfrac{1}{2} & \sfrac{1}{2} \\
		\sfrac{1}{2} & 0 \\
		\sfrac{1}{2} & 0 \\
		\end{pmatrix}$
	\end{minipage}

	\noindent It can be easily verified that the following two decompositions of $X^2$ are both feasible:
	\begin{align*}
	X^2 &= \frac{1}{2} \begin{pmatrix}
	0 & 1\\
	0 & 1\\
	1 & 0\\
	1 & 0\\
	\end{pmatrix}
	+ \frac{1}{2} \begin{pmatrix}
	1 & 0\\
	1 & 0\\
	0 & 0\\
	0 & 0\\
	\end{pmatrix} = \frac{1}{2} \begin{pmatrix}
	0 & 1\\
	1 & 0\\
	1 & 0\\
	0 & 0\\
	\end{pmatrix}
	+ \frac{1}{2} \begin{pmatrix}
	1 & 0\\
	0 & 1\\
	0 & 0\\
	1 & 0\\
	\end{pmatrix}\label{eq:decomp2}.
	\end{align*}
	Although both decompositions result in the same probabilistic assignment $X^2$, a risk-averse decision maker will probably prefer the second decomposition over the first one: the second decomposition of $X^2$ is guaranteed to always assign three agents to an object, whereas there is a risk that only two agents will be assigned to an object by the first decomposition of $X^2$.
\end{example}

Inspired by Examples~\ref{ex:BvN_notPE} and \ref{ex:maximin}, we identify two criteria that impact the desirability of a decomposition. First of all, the agents generally have a set of properties that, depending on the context, they want to see satisfied by all matchings in the decomposition (Example~\ref{ex:BvN_notPE}). These properties could include, for example, ex-post efficiency, or group-specific quota. Secondly, we argue that risk-averse decision makers want to minimize the risk of ending up with an undesirable matching when implementing a probabilistic assignment. What exactly is considered to be an undesirable matching depends on the application at hand, but examples include matchings with a low average utility of the agents, a high average preference for the received object, a high total travel distance \cite[e.g., ][]{biro2021complexity}, a low \textit{popularity} \citep[e.g., ][]{kondratev2022minimal, mccutchen2008least}, or a low total number of assigned agents (Example~\ref{ex:maximin}). In this paper, we focus on maximizing the worst-case number of assigned agents. As will be discussed in Section~\ref{subsec:flexibility}, however, the column generation framework that we introduce can be easily adapted to find a decomposition that optimizes other worst-case criteria, or can even be adapted to different settings in which other criteria than ex-post efficiency should be satisfied by the matchings in the decomposition.

More specifically, we study the two following, closely related problems. First, consider the problem of maximizing the worst-case number of assigned agents in the decomposition of a probabilistic assignment~$X\in \Delta\mathcal{M}$, while allowing all feasible matchings in~$\mathcal{M}$ to be used in this decomposition. We will refer to this problem as \textsc{Maximin Decomposition of} $X$, or simply MD$(X)$. Denoting all matchings in $\mathcal{M}$ that assign at least $k \in \mathbb{N}$ agents to an object by $\mathcal{M}_k \subseteq \mathcal{M}$, we formally define MD$(X)$ as follows:
\begin{definition}
    \label{def:MD(X)}
	\textsc{Maximin Decomposition of} ${X}$ (MD$(X)$). Given a one-sided matching instance $I=(N, O, >, q) \in \mathcal{I}$, and a probabilistic assignment $X \in \Delta \mathcal{M}$, find a decomposition of $X$ over $\mathcal{M}_k$ that maximizes $k$.
\end{definition}

Second, we define a problem similar to MD$(X)$, but with the additional requirement that all matchings in the decomposition of $X$ be ex-post efficient. Denote the set of all ex-post efficient matchings that assign at least $k \in \mathbb{N}$ agents to an object by $\mathcal{M}^{\text{SD}}_k \subseteq \mathcal{M}^{\text{SD}}$.

\begin{definition}
	\label{def:MD-SD(X)}
	\textsc{Maximin Decomposition of $X$ over $\mathcal{M}^{\text{SD}}$} (MD-SD$(X)$). Given a one-sided matching instance $I=(N, O, >, q) \in \mathcal{I}$, and a probabilistic assignment $X \in \Delta \mathcal{M}$, find a decomposition of $X$ over $\mathcal{M}^{\text{SD}}_k$ that maximizes $k$.
\end{definition}

In the remainder of this paper, we will refer to the largest value of $k$ for which~$X$ is implementable over $\mathcal{M}_k$, resp.\
$\mathcal{M}_k^{\text{SD}}$, as the \textit{optimal value} of MD$(X)$, resp.\ MD-SD$(X)$, and we will denote the optimal value of MD-SD$(X)$ by $z(X)$. 

Denote the floor-operator by $\lfloor \cdot \rfloor$, and the ceiling-operator by $\lceil \cdot \rceil$. We can identify the following intuitive upper bound on the optimal values of MD$(X)$ and MD-SD$(X)$, which is based on the observation that it is not possible to decompose a probabilistic assignment~$X$ by only using matchings that assign strictly more agents to an object than the expected number of assigned agents in $X$.

\begin{proposition}
	\label{prop:UB_MD}
	The optimal values of MD$(X)$ and MD-SD$(X)$ are upper bounded by~$\lfloor \mu(X)\rfloor$, for all $X \in \Delta \mathcal{M}$.
\end{proposition}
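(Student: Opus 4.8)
The plan is to establish the upper bound directly from the defining property of a decomposition, exploiting the linearity of the functional $\mu$. Suppose, toward a contradiction, that $X$ is implementable over $\mathcal{M}_k$ for some $k > \lfloor \mu(X) \rfloor$; since $\mathcal{M}_k^{\text{SD}} \subseteq \mathcal{M}_k$, treating MD$(X)$ suffices to cover both problems. By Definition~\ref{def:impl}, there exist matchings $\{M^t\}_{t=1}^T \subseteq \mathcal{M}_k$ and nonnegative weights $\{\lambda^t\}_{t=1}^T$ with $\sum_{t=1}^T \lambda^t = 1$ such that $X = \sum_{t=1}^T \lambda^t M^t$.

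The key step is to apply $\mu$ to both sides of this identity. Because $\mu(Y) = \sum_{(i,j) \in N \times O} y_{ij}$ is a linear functional of the entries of $Y$, we have $\mu(X) = \mu\!\left(\sum_{t=1}^T \lambda^t M^t\right) = \sum_{t=1}^T \lambda^t \mu(M^t)$. Now each $M^t \in \mathcal{M}_k$ assigns at least $k$ agents to an object, which by the definition of $\mu$ means $\mu(M^t) \geq k$ for every $t$. Combining this with $\sum_t \lambda^t = 1$ gives
\begin{equation*}
\mu(X) = \sum_{t=1}^T \lambda^t \mu(M^t) \geq \sum_{t=1}^T \lambda^t k = k.
\end{equation*}
Hence $\mu(X) \geq k > \lfloor \mu(X) \rfloor$, so that $\mu(X) > \lfloor \mu(X) \rfloor$. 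This forces $\mu(X)$ to be non-integer, and moreover, since $k$ is an integer with $k > \lfloor \mu(X) \rfloor$, we in fact have $k \geq \lfloor \mu(X) \rfloor + 1 > \mu(X)$, directly contradicting $\mu(X) \geq k$.

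I would then conclude that no $k > \lfloor \mu(X) \rfloor$ can be the value of a feasible decomposition, so the optimal values of both MD$(X)$ and MD-SD$(X)$ are at most $\lfloor \mu(X) \rfloor$. The argument is essentially routine once the right observation is isolated, so there is no genuine obstacle; the only point requiring care is handling the rounding cleanly. Since $k$ ranges over $\mathbb{N}$ (as the matchings assign an integer number of agents), the inequality $\mu(X) \geq k$ together with integrality of $k$ immediately yields $k \leq \lfloor \mu(X) \rfloor$, which is the cleanest way to phrase the final implication and avoids any casework on whether $\mu(X)$ is itself an integer.
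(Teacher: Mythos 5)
Your proof is correct and takes essentially the same approach as the paper, which presents the bound as following from the observation that one cannot decompose $X$ using only matchings that assign strictly more agents than the expected number $\mu(X)$ — your linearity argument $\mu(X) = \sum_{t=1}^T \lambda^t \mu(M^t) \geq k$ combined with the integrality of $k$ is exactly the formalization of that observation. The reduction of MD-SD$(X)$ to MD$(X)$ via $\mathcal{M}_k^{\text{SD}} \subseteq \mathcal{M}_k$ is also handled correctly, so there are no gaps.
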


\section{Obtaining a maximin decomposition: theoretical results}
\label{sec:analysis}
In this section, we examine the complexity and the optimal values of MD$(X)$ and MD-SD$(X)$. We show in Section~\ref{subsec:MD(X)} that the optimal value of MD$(X)$ is $\lfloor\mu(X)\rfloor$, for all $X \in \Delta \mathcal{M}$, and we propose a polynomial-time algorithm to find a corresponding decomposition. In Section~\ref{subsec:complex_MD-SD}, we establish that for MD-SD$(X)$ the same result only holds for a specific class of probabilistic matchings, whereas MD-SD$(X)$ is $\mathcal{NP}$-hard for general probabilistic assignments. Specifically for MD-SD$(X^\text{RSD})$, in Section~\ref{subsec:bounding_MD-SD(X^RSD)} we provide a lower and upper bound on the optimal gain in the worst-case number of assigned agents compared to the RSD algorithm. 

\subsection{Complexity of \texorpdfstring{MD$(X)$}{MD(X)}}
\label{subsec:MD(X)}
The following theorem states that obtaining a decomposition of~$X$ that attains the upper bound~$\lfloor \mu(X)\rfloor$ of Proposition~\ref{prop:UB_MD} can be achieved in polynomial time. The existence of such a decomposition follows from a result by \cite{budish2013} for a more general constraint structure known as a \textit{bihierarchy}. Budish et al.\ further describe an implementation in the sense of Definition~\ref{def:impl_decomp} that runs in polynomial time. The main novelty of our result is that it generalizes the method of Budish et al.\ in the spirit of the Birkhoff-von Neumann algorithm such that it returns a complete decomposition in polynomial time, and not simply one matching in this decomposition.

\begin{theorem}
	\label{theorem:impl_variab}
	For every probabilistic assignment $X \in \Delta \mathcal{M}$, we can find a decomposition~$(\mathcal{M}^\prime, \lambda)$ of~$X$ in polynomial time in which each matching $M \in \mathcal{M}^\prime$ assigns either $\lfloor \mu(X)\rfloor$ or $\lceil \mu(X) \rceil$ agents.
\end{theorem}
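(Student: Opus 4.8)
The plan is to mimic the Birkhoff-von Neumann decomposition algorithm, but at each iteration to extract a matching that assigns exactly $\lfloor \mu(X)\rfloor$ or $\lceil \mu(X)\rceil$ agents, and then to argue that the residual fractional assignment (after subtracting a suitable multiple of this matching) is again a feasible probabilistic assignment whose expected number of assigned agents is pinned between the same two integers. Write $\mu = \mu(X)$ and let $\underline{\mu} = \lfloor \mu \rfloor$, $\overline{\mu} = \lceil \mu \rceil$. First I would observe that the feasibility constraints (i) and (ii) of Section~\ref{sec:setup}, together with the single cardinality constraint $\underline{\mu} \le \sum_{(i,j)} x_{ij} \le \overline{\mu}$, form a \emph{bihierarchy} in the sense of \cite{budish2013}: the agent-side constraints $\sum_j x_{ij} \le 1$ and the object-side constraints $\sum_i x_{ij} \le q_j$ constitute two laminar families (in fact two partitions) on the rows and columns respectively, and the total-mass constraint is the common root that is shared consistently by both hierarchies. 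Hence the polytope of feasible assignments satisfying the extra cardinality bounds is integral, and its vertices are matchings $M$ with $\underline{\mu} \le \mu(M) \le \overline{\mu}$, i.e.\ matchings assigning either $\underline{\mu}$ or $\overline{\mu}$ agents. This is exactly where I invoke the Budish et al.\ machinery that the theorem statement credits.

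The second step is to turn this vertex structure into a genuine \emph{iterative} decomposition, which is the real contribution over simply citing \cite{budish2013}. I would proceed as in the classical Birkhoff-von Neumann argument: given the current residual assignment $X$ (normalized so its nonzero weight is $\lambda_{\mathrm{tot}}$), find a single extreme-point matching $M^t$ with $\mu(M^t) \in \{\underline{\mu}, \overline{\mu}\}$ supported on the fractional entries of $X$, choose the largest coefficient $\lambda^t$ such that $X - \lambda^t M^t$ remains componentwise nonnegative \emph{and} still respects the cardinality bounds, subtract it, and recurse on $X - \lambda^t M^t$. The choice of $\lambda^t$ should be the minimum of (a) the usual bottleneck $\min\{x_{ij} : m^t_{ij}=1\}$ that zeroes out at least one fractional entry, and (b) a bound forcing the residual mass $\mu - \lambda^t \mu(M^t)$ to stay in $[\,(1-\textstyle\sum \lambda)\underline{\mu},\ (1-\sum\lambda)\overline{\mu}\,]$, i.e.\ preventing the running average from exceeding the admissible integer window. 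Standard reasoning then shows each iteration either zeroes a new entry or saturates a cardinality bound, so the number of iterations is polynomial in $|N|\cdot|O|$, and each iteration requires only an extreme-point / matching computation on the bihierarchy polytope, which is polynomial.

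The main obstacle is verifying that the intermediate residuals remain \emph{feasible probabilistic assignments of the correct cardinality type}, and in particular that at every stage one can still find a matching whose cardinality lies in $\{\underline{\mu}, \overline{\mu}\}$ rather than slipping outside this window. The delicate point is that after subtracting $\lambda^t M^t$, the normalized residual $\tfrac{1}{1-\sum\lambda} (X - \sum_s \lambda^s M^s)$ must have expected cardinality again in $[\underline{\mu}, \overline{\mu}]$; this is precisely what the second bound (b) on $\lambda^t$ is designed to guarantee, and I would check that this bound never conflicts with nonnegativity in a way that stalls progress (i.e.\ that $\lambda^t>0$ at every step). Once feasibility of the residual polytope is established, integrality of its vertices (from the bihierarchy) supplies the next matching, and an induction on the number of fractional entries plus saturated constraints closes the argument. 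Finally I would note that the returned weights $\lambda$ sum to one because the total subtracted mass exactly matches $\mu$ throughout, so $(\mathcal{M}', \lambda)$ is a bona fide decomposition in the sense of Definition~\ref{def:impl_decomp}.
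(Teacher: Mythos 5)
Your high-level architecture (Budish et al.\ bihierarchy integrality plus an iterative Birkhoff--von Neumann-style extraction) is essentially the paper's, and it can be made to work; but as written your rules for choosing $M^t$ and $\lambda^t$ have a genuine gap: nothing forces the extracted matching to respect the constraints of the residual that are currently \emph{tight} --- rows whose sum equals $1$, columns whose sum equals the capacity $q_j$, entries at their upper bound --- and your step-size bounds (a) and (b) control only entrywise nonnegativity and the total-mass window, not row/column feasibility of the normalized residual. Concretely, take $N=\{1,2\}$, $O=\{a,b\}$, $q=(1,1)$ and
\begin{equation*}
X=\begin{pmatrix}\tfrac{1}{2} & \tfrac{1}{2}\\ \tfrac{1}{2} & 0\end{pmatrix},\qquad \mu(X)=\tfrac{3}{2},\qquad [\underline{\mu},\overline{\mu}]=[1,2].
\end{equation*}
The matching $M^1$ that assigns agent~$1$ to $b$ and leaves agent~$2$ unassigned is an extreme point of your polytope (every feasible $0/1$ point is one), has cardinality $1\in\{\underline{\mu},\overline{\mu}\}$, and is supported on fractional entries of $X$, so your rules allow it. Bound (a) gives $\lambda^1\le\tfrac{1}{2}$, and bound (b) gives $\tfrac{3}{2}-\lambda^1\le 2(1-\lambda^1)$, i.e.\ $\lambda^1\le\tfrac{1}{2}$, so you subtract $\tfrac{1}{2}M^1$. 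The residual is $\bigl(\begin{smallmatrix}1/2 & 0\\ 1/2 & 0\end{smallmatrix}\bigr)$ with remaining weight $\tfrac{1}{2}$: it places total mass $1$ on column $a$, yet any mixture of matchings with total weight $\tfrac{1}{2}$ can place at most $\tfrac{1}{2}$ there since $q_a=1$. The procedure is stuck, even though $X$ decomposes fine, e.g.\ $X = \tfrac{1}{2}\bigl(\begin{smallmatrix}0&1\\1&0\end{smallmatrix}\bigr)+\tfrac{1}{2}\bigl(\begin{smallmatrix}1&0\\0&0\end{smallmatrix}\bigr)$. The culprit is that column $a$ was tight at capacity in $X$ while $M^1$ left it empty; and note that merely adding row/column bounds to the definition of $\lambda^t$ does not repair this, because with this $M^1$ those bounds force $\lambda^1=0$ and progress stalls.

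The missing invariant is exactly the crux of the paper's proof: the extracted matching must \emph{agree with every constraint set of the current residual whose value is integer} (the paper's Property~(ii), $\sum_{(i,j)\in S}m^t_{ij}=\sum_{(i,j)\in S}x^t_{ij}$ whenever the latter is integer, enforced by the modified Budish et al.\ subroutine of Appendix~A), together with exact cardinality preservation (Property~(i)). In your language, $M^t$ must be a vertex of the \emph{minimal face} of the polytope containing the normalized residual, not just any vertex supported on its fractional entries. With that strengthening, your subtraction scheme --- taking $\lambda^t$ as the largest step keeping the normalized residual in the polytope --- does terminate in polynomially many iterations, since each step makes a new constraint tight while preserving the old ones; this mirrors the paper's potential argument counting integer-valued constraint sets. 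The remaining differences are cosmetic: the paper moves \emph{away} from $M^t$ (setting $X^{t+1}=X^t+\lambda^t(X^t-M^t)$) instead of subtracting it, and handles fractional $\mu(X)$ by adding a dummy agent--object pair rather than carrying the window $[\underline{\mu},\overline{\mu}]$ through the iteration as a bihierarchy constraint, which is a legitimate alternative.
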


\begin{proof}
	Consider a probabilistic assignment $X \in \Delta \mathcal{M}$ and assume that $\mu(X) \in \mathbb{N}$. We will describe an iterative approach to obtain a decomposition of~$X$ in which all matchings assign exactly~$\mu(X)$ agents. The case where $\mu(X)$ is fractional can be dealt with by adding an additional row and column to~$X$, corresponding to a dummy agent~$a$ and a unit-capacity object~$o$, where the unique non-zero element~$(a,o)$ has an assignment probability equal to $\lceil \mu(X) \rceil - \mu(X)$. A decomposition that always assigns~$\lceil \mu(X) \rceil$ agents in this modified instance, then guarantees to assign either $\lfloor \mu(X)\rfloor$ or $\lceil \mu(X) \rceil$ non-dummy agents in the original instance.
	
	Following \cite{budish2013}, we can rewrite the conditions for~$X$ to be a feasible probabilistic assignment using a \emph{constraint structure}~$\mathcal{H} \subset 2^{N \times O}$, where~$\mathcal{H}$ contains a \textit{constraint set} $S \subseteq N \times O$ with associated \emph{quota}~$q_S$ for each agent, for each object, and for each agent-object pair. In particular, define $$\mathcal{H} = \bigcup_{(i,j) \in N \times O}\{(i,j)\} \cup \bigcup_{i \in N} (\{i\} \times O) \cup \bigcup_{j \in O} (N \times \{j\}),$$ and let~$q_S = q_j$ if $S = N \times \{j\}$ for some $j \in O$, and $q_S = 1$ for all other~$S\in\mathcal{H}$. A probabilistic assignment~$X$ is then feasible if and only if $0 \leq \sum_{(i,j) \in S} x_{ij} \leq q_S$ for every~$S \in \mathcal{H}$. Observe that~$\vert \mathcal{H} \vert = \vert N \vert \cdot \vert O \vert + \vert N \vert  + \vert O \vert = \mathcal{O}(\vert N \vert \cdot \vert O \vert)$.
	
	Each iteration~$t$ of our iterative approach starts from a given probabilistic assignment~$X^t$, where~$X^t = X$ if~$t = 1$, and $X^t$ is defined in the previous iteration otherwise. If~$X^t$ is a matching, i.e., if $X^t \in \mathcal{M}$, then the procedure terminates and we define~$M^t = X^t$. Otherwise, we apply to~$X^t$ a modified version of the algorithm by \cite{budish2013}, which we describe in \ref{app:budish_modified}, in order to obtain a matching $M^t \in \mathcal{M}$ with the following two properties:
	\vspace{-0.1cm}
	\begin{enumerate}[label=(\roman*)]
		\itemsep 0em
		\item $\mu(M^t) = \mu(X^t)$; \label{prop_cardinality}
		\item if $\sum_{(i,j) \in S}x^t_{ij}$ is integer, then $\sum_{(i,j) \in S}m^t_{ij} = \sum_{(i,j) \in S}x^t_{ij}$, for each $S \in \mathcal{H}$. \label{prop_integrality}
	\end{enumerate}
	\vspace{-0.35cm}
	Next, let~$\lambda^t$ be the largest value for $\lambda \in \mathbb{R}$ such that
	$X^t + \lambda(X^t - M^t) \in \Delta\mathcal{M}$, where~$\lambda^t$ can be determined by checking which of the constraint sets~$S\in \mathcal{H}$ becomes binding first. The procedure then defines $X^{t + 1} = X^t + \lambda^t(X^t - M^t)$,
	and proceeds to the next iteration. 
	
	Now assume that our procedure terminates in~$T$ iterations. It follows by expanding the above recursion that the matchings~$\{M^t\}_{t=1}^T$ and weights~$\hat{\lambda} = \{\hat{\lambda}^t\}_{t=1}^T$, with $$\hat{\lambda}^t = \begin{cases}
		\frac{\lambda^t}{1 + \lambda^t} & \text{if $t = 1$,}\\
		\frac{\lambda^t}{1 + \lambda^t} \prod_{u = 1}^{t-1} \frac{1}{1 + \lambda^u}& \text{if $t \in \{2, \ldots, T-1\}$,}\\
		\prod_{u = 1}^{t-1} \frac{1}{1 + \lambda^u} & \text{if $t = T$,}
	\end{cases}$$ yield a decomposition of~$X$ in which all matchings assign~$\mu(X)$ agents. This latter guarantee follows from Property~\ref{prop_cardinality} of~$M^t$ and by definition of~$X^{t + 1}$ for each $t = 1, \ldots, T - 1$.
	
	We claim that our procedure terminates in~$T \leq \vert \mathcal{H} \vert$ iterations. For each probabilistic assignment $X^\prime \in \Delta\mathcal{M}$, denote by $$\textstyle\tau(X^\prime) = \vert \{S \in \mathcal{H}\colon \sum_{(i,j) \in S}x^\prime_{ij} \in \mathbb{N}\} \vert$$ the number of constraint sets~$S \in \mathcal{H}$ that are integer in~$X^\prime$. Observe that~$\tau(X^\prime) \leq \vert \mathcal{H} \vert$, where equality is reached if and only if~$X^\prime \in \mathcal{M}$. To prove our claim, it thus suffices to show that, for every iteration~$t$ with~$X^t \notin \mathcal{M}$, it holds that $\tau(X^{t + 1}) > \tau(X^t)$.
	
	Consider an arbitrary iteration~$t$ in which~$X^t \notin \mathcal{M}$, and denote~$R^t = X^t - M^t$. Observe that~$R^t \neq 0$ since~$X^t \neq M^t$. By Property~\ref{prop_integrality} of~$M^t$, it holds for every $S \in \mathcal{H}$ for which $\sum_{(i,j) \in S} x^t_{ij}$ is integer that $\sum_{(i,j) \in S} r^t_{ij} = 0$ and thus, by definition of~$X^{t+1}$, that $\sum_{(i,j) \in S} x^{t+1}_{ij} =  \sum_{(i,j) \in S} x^{t}_{ij}$. Property~\ref{prop_integrality} of~$M^t$ also implies that $\sum_{(i,j) \in S} x^t_{ij}$ is fractional for every $S \in \mathcal{H}$ with $\sum_{(i,j) \in S} r^t_{ij} \neq 0$, and thus there exists a~$\lambda > 0$ for which $X^t + \lambda R^t \in \Delta\mathcal{M}$. The maximal choice~$\lambda^t$ for~$\lambda$ is then such that there is a $S^\prime \in \mathcal{H}$ with $\sum_{(i,j) \in S} x^{t}_{ij}$ fractional and $\sum_{(i,j) \in S} x^{t+1}_{ij}$ integer (because a new constraint becomes binding). This yields $\tau(X^{t + 1}) > \tau(X^t)$, which proves our claim. 
	
	Since the modified algorithm of \cite{budish2013}, described in \ref{app:budish_modified}, runs in time $\mathcal{O}(\vert \mathcal{H} \vert \min\{\vert N \vert, \vert O \vert\})$ and since finding~$\lambda^t$ in a given iteration~$t < T$ can be done in time~$\mathcal{O}(\vert \mathcal{H}\vert)$, we can obtain our desired decomposition of~$X$ in time  $\mathcal{O}(\vert \mathcal{H} \vert^2 \min\{\vert N \vert, \vert O \vert\})$.
\end{proof}

Theorem~\ref{theorem:impl_variab} implies that the optimal value of MD$(X)$ is equal to $\lfloor \mu(X)\rfloor$, for any $X \in \Delta \mathcal{M}$, and that the corresponding decomposition can be found in polynomial time. In other words, if the decision maker does not require specific properties to be satisfied by all matchings in the decomposition of a probabilistic assignment $X$, it is always possible to decompose $X$ by only using matchings that assign at least $\lfloor \mu(X) \rfloor$ agents to an object.

\subsection{Complexity of \texorpdfstring{MD-SD($X$)}{MD-SD(X)}}
\label{subsec:complex_MD-SD}
In contrast to MD$(X)$, the complexity of MD-SD$(X)$ depends on the properties of the probabilistic assignment~$X$. Consider the following property.

\begin{definition}
\label{def:robust}
	\citep{aziz2015ex} A probabilistic assignment~$X \in \Delta\mathcal{M}$ is \emph{robust ex-post efficient} if and only if every possible decomposition of $X$ only contains ex-post efficient matchings.
\end{definition}

Denote the set of robust ex-post efficient probabilistic assignments by $\Delta \mathcal{M}^{\text{R}} \subseteq \Delta \mathcal{M}$. Define a probabilistic assignment $X \in \Delta\mathcal{M}$ to be \textit{ex-post efficient} if a decomposition of $X$ exists that only contains ex-post efficient matchings. The following proposition shows the relationship between the three efficiency concepts for probabilistic assignments that have been mentioned in this paper.

\begin{proposition}
	\label{prop:eff_implications}
	\citep{aziz2015ex}  Ordinal efficiency of a probabilistic assignment implies robust ex-post efficiency, which in turn implies ex-post efficiency.
\end{proposition}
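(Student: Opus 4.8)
The plan is to treat the two implications separately, since the second is essentially immediate while the first carries all the content. For \emph{robust ex-post efficient} $\Rightarrow$ \emph{ex-post efficient}, I would simply invoke the Birkhoff-von Neumann theorem: every $X \in \Delta\mathcal{M}$ admits at least one decomposition, and robust ex-post efficiency guarantees that this (indeed, any) decomposition consists solely of ex-post efficient matchings, which is exactly the defining condition of ex-post efficiency. No further work is needed here.

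For \emph{ordinally efficient} $\Rightarrow$ \emph{robust ex-post efficient}, I would argue by contraposition. Suppose $X$ is not robust ex-post efficient, so some decomposition $(\{M^t\}_{t=1}^T, \{\lambda^t\}_{t=1}^T)$ of $X$, with $\lambda^1 > 0$, contains a matching $M^1$ that is not ex-post efficient. By definition there is a feasible matching $M' \in \mathcal{M}$ that is weakly preferred to $M^1$ by every agent and strictly preferred by at least one agent $i^\star$; that is, $M'(i) \geq_i M^1(i)$ for all $i \in N$ (weak preference derived from the strict ordering $>_i$, allowing equality), with strict improvement for $i^\star$. I would then form
\[
X' = X - \lambda^1 M^1 + \lambda^1 M' = \lambda^1 M' + \sum_{t=2}^T \lambda^t M^t,
\]
which lies in $\Delta\mathcal{M}$ because it is again a convex combination of feasible matchings.

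The core of the argument is to show that $X'$ ordinally dominates $X$, i.e.\ that $X'$ is weakly preferred to $X$ by every agent and strictly preferred by $i^\star$ under the first-order stochastic-dominance relation on $\Delta\mathcal{M}$ that gives meaning to the preference comparison in Definition~\ref{def:ex-antePE}. Fix an agent $i$ and a threshold $k \in O \cup \varnothing$. Passing from $X$ to $X'$ moves probability mass $\lambda^1$ off the object $M^1(i)$ and onto $M'(i)$, so the upper-contour-set probability $\sum_{j \,:\, j \geq_i k} x'_{ij}$ changes by $\lambda^1\big(\mathbf{1}[M'(i) \geq_i k] - \mathbf{1}[M^1(i) \geq_i k]\big)$. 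Because $M'(i) \geq_i M^1(i)$, transitivity gives $\mathbf{1}[M'(i) \geq_i k] \geq \mathbf{1}[M^1(i) \geq_i k]$, so every such change is non-negative and $X'_i$ weakly stochastically dominates $X_i$ for all $i$. Taking $i = i^\star$ and $k = M'(i^\star)$, the mass $\lambda^1 > 0$ is added to an object in the upper contour set while being removed from the strictly worse object $M^1(i^\star)$ outside it, yielding a strict increase and hence strict dominance for $i^\star$. This contradicts ordinal efficiency of $X$ and completes the contrapositive.

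The step I expect to be the main obstacle is precisely this translation between the two notions of domination: making rigorous that Pareto domination of the single matching $M^1$ by $M'$ lifts, after re-weighting inside the otherwise fixed mixture, to first-order stochastic dominance of the whole assignment $X'$ over $X$. The delicate points are keeping the convex weights feasible, correctly treating the outside option $\varnothing$ as a possible assigned ``object'' when an agent is unmatched, and verifying the strict inequality at the right threshold for $i^\star$; both feasibility of $X'$ and monotonicity of the upper-contour-set probabilities then fall out of the indicator computation sketched above.
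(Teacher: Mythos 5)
Your proof is correct, but note that the paper itself contains no proof of this proposition: the result is imported from \citep{aziz2015ex}, with only the remark that the original argument (stated for a setting with equally many agents and unit-capacity objects, \citep[][Theorem~5]{Aziz2014structurearXiv}) ``directly carries over'' to the present setting. What you have done is reconstruct that argument and verify that it does carry over to capacities $q_j$ and outside options, which is precisely the content the paper leaves implicit. Your two steps are the standard ones: robust ex-post efficiency implies ex-post efficiency because the Birkhoff-von Neumann theorem guarantees that at least one decomposition exists at all; and for the contrapositive of the first implication you swap a positive-weight Pareto-dominated matching $M^1$ for a dominating matching $M'$ inside an otherwise unchanged mixture and verify first-order stochastic dominance through the upper-contour indicator computation, with strictness at the threshold $k = M'(i^\star)$. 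Two points in your write-up deserve the care you ask for. First, the change formula $\lambda^1\bigl(\mathbf{1}[M'(i)\geq_i k]-\mathbf{1}[M^1(i)\geq_i k]\bigr)$ is exact only under the convention that the upper-contour sums count the implicit outside-option probability $1-\sum_{j\in O}x_{ij}$ whenever $\varnothing\geq_i k$; if the sums run over $O$ alone, then the case where $M^1(i)$ is unacceptable and $M'(i)=\varnothing$ produces a genuine decrease at thresholds $k$ with $M^1(i)\geq_i k$, so this convention must be built into the dominance relation (the paper's Definition~\ref{def:ex-antePE} leaves it informal, and your parenthetical remark about treating $\varnothing$ as an assignable object is exactly what is needed). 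Second, your assumption $\lambda^1>0$ is not cosmetic: under a reading of Definition~\ref{def:robust} that permits weight-zero matchings in a decomposition, no assignment would be robust ex-post efficient and the proposition would be false, so ``contains'' must mean ``contains with positive weight,'' exactly as you use it.
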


Although the above result was originally established for a setting with an equal number of agents and unit-capacity objects  \citep[][Theorem~5]{Aziz2014structurearXiv}, the proof directly carries over to our setting. Proposition~\ref{prop:eff_implications} has the following implications for the PS mechanism.

\begin{corollary}
	\label{cor:PS_robust}
	The PS mechanism will always return a probabilistic assignment that is  robust ex-post efficient.
\end{corollary}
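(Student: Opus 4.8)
The plan is to recognize that this corollary is an immediate consequence of two facts already in place, so the proof reduces to a short chain of implications rather than any fresh argument. First I would recall the result of \cite{bogomolnaiaMoulin2001}, reported just before Definition~\ref{def:ex-antePE}, that the Probabilistic Serial mechanism always returns an \emph{ordinally efficient} probabilistic assignment; this is precisely the property that distinguishes PS from RSD. Hence, for any instance $I \in \mathcal{I}$, the assignment $X^{\text{PS}}$ produced by PS is ordinally efficient.

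Second, I would invoke Proposition~\ref{prop:eff_implications}, which states that ordinal efficiency implies robust ex-post efficiency. Applying this implication to $X^{\text{PS}}$ gives directly that $X^{\text{PS}} \in \Delta\mathcal{M}^{\text{R}}$, i.e., every decomposition of $X^{\text{PS}}$ contains only ex-post efficient matchings. Since this holds for an arbitrary instance, the PS mechanism always returns a robust ex-post efficient assignment, which is exactly the claim.

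Because both ingredients are available, there is essentially no obstacle to overcome: the corollary follows by composing the two implications, and the proof is a one-liner. The only point that warrants a moment of care is to confirm that both cited results apply in the present capacitated setting with a possibly unequal number of agents and objects, rather than only in the unit-capacity, balanced case. As the remark following Proposition~\ref{prop:eff_implications} already notes, its proof carries over from the setting of \cite{Aziz2014structurearXiv} to ours; an analogous observation for the ordinal efficiency of PS, which the eating procedure guarantees in the capacitated case as well, completes the argument.
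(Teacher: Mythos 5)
Your proof is correct and follows exactly the paper's (implicit) argument: the corollary is stated as an immediate consequence of Proposition~\ref{prop:eff_implications} combined with the fact, due to \cite{bogomolnaiaMoulin2001}, that PS always outputs an ordinally efficient assignment. Your added remark about the capacitated, unbalanced setting mirrors the paper's own caveat following Proposition~\ref{prop:eff_implications}, so nothing is missing.
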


Moreover, with respect to the RSD mechanism, \cite{aziz2015ex} obtained the following result (see Example~\ref{ex:BvN_notPE}, where $X^1 = X^{\text{RSD}}$, and $M^3, M^4 \notin \mathcal{M}^{\text{SD}}$). 

\begin{proposition}
	\label{prop:RSD_robust}
	\citep{aziz2015ex} The RSD mechanism may return a probabilistic assignment that is not robust ex-post efficient.
\end{proposition}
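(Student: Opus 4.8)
The plan is to establish the claim by exhibiting a single instance whose RSD assignment admits a decomposition containing a matching that is not ex-post efficient; by Definition~\ref{def:robust}, the existence of one such decomposition already suffices to conclude that the assignment is not robust ex-post efficient. The natural candidate is the instance of Example~\ref{ex:intro}, and the first (and main) task is to verify that the probabilistic assignment $X^1$ displayed there is in fact $X^{\text{RSD}}$ for that instance.

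To compute $X^{\text{RSD}}$ I would appeal to the definition in \eqref{eq:RSD} and average $\text{SD}(I,\sigma)$ over all $|N|!=24$ orderings $\sigma \in \Sigma'$. The computation simplifies considerably once one exploits the structure of the preferences. Since every agent ranks $a$ first and $a$ has capacity~$2$, the two agents occupying the first two positions of $\sigma$ always receive $a$, so $x_{ia}=\tfrac12$ for every $i$. Because agents~$3$ and~$4$ prefer the outside option to both $b$ and $c$, they are assigned only when they land in one of the first two positions, giving the rows $(\tfrac12,0,0)$. By the symmetry between agents~$1$ and~$2$ (and between $3$ and $4$) it then suffices to track agent~$1$: conditioning on the uniformly random $2$-subset of agents placed in positions~$1,2$, a short case analysis over the $\binom{4}{2}=6$ possibilities shows that agent~$1$ receives $b$ with probability $\tfrac{5}{12}$ and receives $c$ only in the single configuration where both $1$ and $2$ are pushed out of $a$, with probability $\tfrac{1}{12}$. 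This recovers the rows $(\tfrac12,\tfrac{5}{12},\tfrac{1}{12})$ and confirms $X^1 = X^{\text{RSD}}$.

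With this identification in hand, the remainder is immediate. Example~\ref{ex:intro} already exhibits a decomposition of $X^1$ into $M^1,\dots,M^4$, and Example~\ref{ex:BvN_notPE} observes that $M^3$ and $M^4$ are not ex-post efficient (e.g., in $M^3$ agent~$2$ is matched to $c$ even though she prefers $b$, whose capacity is unused). Hence $X^{\text{RSD}}$ admits at least one decomposition that contains a matching outside $\mathcal{M}^{\text{SD}}$, so by Definition~\ref{def:robust} it is not robust ex-post efficient, which proves the proposition.

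I expect the only genuine obstacle to be the bookkeeping in the RSD computation: ensuring that the conditioning on the first two positions is handled correctly and that, in the one case where both agents~$1$ and~$2$ miss object~$a$, the probabilities of receiving $b$ versus $c$ are split properly according to their relative order. Everything beyond that verification is either a direct appeal to the two worked examples or a one-line invocation of the definition of robust ex-post efficiency.
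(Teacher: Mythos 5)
Your proof is correct and follows exactly the route the paper takes: it points to Example~\ref{ex:intro}, asserts $X^1 = X^{\text{RSD}}$ for that instance, and invokes Example~\ref{ex:BvN_notPE} (where $M^3, M^4 \notin \mathcal{M}^{\text{SD}}$) together with Definition~\ref{def:robust}. Your only addition is explicitly verifying $X^1 = X^{\text{RSD}}$ via the conditioning argument over the six $2$-subsets, which the paper states without computation; that verification is carried out correctly.
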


Using these results, we can now identify an important class of probabilistic assignments for which we obtain the same results as in Theorem~\ref{theorem:impl_variab}, i.e., for which the optimal value of MD-SD$(X)$ is guaranteed to be equal to its upper bound $\lfloor \mu(X)\rfloor$, and for which we can find a decomposition in polynomial time.

\begin{theorem}
	\label{theor:MD-P}
	For every robust ex-post efficient probabilistic assignment~$X \in \Delta \mathcal{M}^{\emph{R}}$, we can find a decomposition $(\mathcal{M}', \lambda)$ of $X$ in polynomial time in which each matching~$M \in \mathcal{M}'$ assigns either $\lfloor \mu(X)\rfloor$ of $\lceil \mu(X)\rceil$ agents.
\end{theorem}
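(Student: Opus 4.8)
The plan is to observe that this theorem follows almost immediately from Theorem~\ref{theorem:impl_variab} together with the definition of robust ex-post efficiency, so that essentially no new construction is required. First I would apply Theorem~\ref{theorem:impl_variab} directly to $X$: since $X \in \Delta\mathcal{M}^{\text{R}} \subseteq \Delta\mathcal{M}$, that theorem already produces, in polynomial time, a decomposition $(\mathcal{M}', \lambda)$ of $X$ in which every matching $M \in \mathcal{M}'$ assigns either $\lfloor \mu(X)\rfloor$ or $\lceil \mu(X)\rceil$ agents. This delivers the cardinality guarantee of the statement verbatim; the only thing that remains to be checked, in order for this same decomposition to also be admissible for MD-SD$(X)$, is that each matching in $\mathcal{M}'$ is ex-post efficient.

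This is exactly where the hypothesis $X \in \Delta\mathcal{M}^{\text{R}}$ does all the work. By Definition~\ref{def:robust}, robust ex-post efficiency means that \emph{every} decomposition of $X$ — and in particular the one produced above — contains only ex-post efficient matchings. Hence $\mathcal{M}' \subseteq \mathcal{M}^{\text{SD}}$ holds automatically, without modifying the decomposition at all. In other words, for a robustly ex-post efficient assignment the ex-post efficiency requirement of MD-SD$(X)$ is vacuous: it is met by any decomposition whatsoever, so the additional constraint distinguishing MD-SD$(X)$ from MD$(X)$ imposes no extra difficulty on this class.

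Combining the two observations, the decomposition $(\mathcal{M}', \lambda)$ is a feasible decomposition of $X$ over $\mathcal{M}^{\text{SD}}_{\lfloor\mu(X)\rfloor}$ that is computed in polynomial time, which shows $z(X) \geq \lfloor \mu(X)\rfloor$. Together with the upper bound $z(X) \leq \lfloor \mu(X)\rfloor$ from Proposition~\ref{prop:UB_MD}, this yields $z(X) = \lfloor \mu(X)\rfloor$, so the same tight conclusion as in Theorem~\ref{theorem:impl_variab} carries over to MD-SD$(X)$ for this class of assignments.

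I do not anticipate a genuine obstacle, since the entire content of the argument is recognizing that robust ex-post efficiency \emph{decouples} the cardinality requirement (handled by Theorem~\ref{theorem:impl_variab}) from the efficiency requirement (handled for free by Definition~\ref{def:robust}). The only points requiring care are to confirm that the inclusion $\Delta\mathcal{M}^{\text{R}} \subseteq \Delta\mathcal{M}$ lets one invoke Theorem~\ref{theorem:impl_variab} unchanged, and to note that its output is a \emph{full} decomposition rather than a single matching, so that Definition~\ref{def:robust} can legitimately be applied to every matching it contains.
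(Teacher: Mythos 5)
Your proposal is correct and follows essentially the same route as the paper's own proof: invoke Theorem~\ref{theorem:impl_variab} to obtain, in polynomial time, a decomposition whose matchings all assign $\lfloor \mu(X)\rfloor$ or $\lceil \mu(X)\rceil$ agents, and then observe that Definition~\ref{def:robust} forces every matching in that (indeed, any) decomposition of $X$ to be ex-post efficient. The extra remarks about $z(X) = \lfloor \mu(X)\rfloor$ via Proposition~\ref{prop:UB_MD} are consistent with the paper's follow-up discussion and do not change the argument.
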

\begin{proof}
	Consider an arbitrary robust ex-post efficient probabilistic assignment $X \in \Delta \mathcal{M}^{\text{R}}$. By using the polynomial-time algorithm from the proof of Theorem~\ref{theorem:impl_variab}, we can find a decomposition of $X$ in which all matchings assign either $\lfloor\mu(X)\rfloor$ or $\lceil \mu(X)\rceil$ agents to an object. By definition of robust ex-post efficiency, all matchings in this decomposition will be ex-post efficient. 
\end{proof}

Combining this result with Corollary~\ref{cor:PS_robust}, we can conclude that we can decompose the probabilistic assignment~$X^{\text{PS}}$ in polynomial time over a set of ex-post efficient matchings that assign at least $\lfloor \mu(X^{\text{PS}})\rfloor$ agents to an object. From Proposition~\ref{prop:RSD_robust}, however, we know that we cannot guarantee the ex-post efficiency of the matchings in the decomposition if we apply the same approach to~$X^{\text{RSD}}$.

Although Theorem~\ref{theor:MD-P} showed that MD-SD$(X)$ is polynomially solvable when~$X$ is robust ex-post efficient, the following result holds for general probabilistic assignments.

\begin{theorem}
	\label{theor:NP-hard_MD}
	MD-SD$(X)$  is $\mathcal{NP}$-hard in general.
\end{theorem}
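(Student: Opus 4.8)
The plan is to establish $\mathcal{NP}$-hardness by a reduction from a known $\mathcal{NP}$-complete problem. Since MD-SD$(X)$ asks for a decomposition of a given probabilistic assignment into ex-post efficient matchings each assigning at least $k$ agents, the natural strategy is to encode a combinatorial decision problem into the structure of a probabilistic assignment $X$, so that achieving a prescribed worst-case number $k$ of assigned agents is possible if and only if the source instance is a \textbf{yes}-instance. A good candidate for the source is a partition-flavoured or a graph-colouring/3-dimensional matching problem, since the requirement that \emph{every} matching in the decomposition simultaneously (i)~be ex-post efficient and (ii)~assign at least $k$ agents is a conjunction of combinatorial constraints that is well-suited to encoding ``all pieces must fit''-type hardness. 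I would most likely reduce from a variant where the difficulty is in finding \emph{ex-post efficient} matchings of high cardinality, since by Theorem~\ref{theorem:impl_variab} the cardinality constraint alone (without the SD-efficiency requirement) is polynomially solvable; hence the hardness must come from the interaction between ex-post efficiency and the maximin cardinality objective.

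Concretely, the key steps I would carry out are as follows. First, I would fix a target instance $I = (N,O,>,q)$ together with a specific probabilistic assignment $X$, constructed from the source instance in polynomial time, and identify the critical value $k = \lfloor \mu(X) \rfloor$ (the upper bound from Proposition~\ref{prop:UB_MD}). Second, I would argue the forward direction: if the source problem is a \textbf{yes}-instance, exhibit an explicit decomposition of $X$ into ex-post efficient matchings each assigning at least $k$ agents, so $z(X) = k$. Third, and this is where most of the work lies, I would argue the reverse direction: if $z(X) = k$, then the combinatorial object certifying a \textbf{yes}-instance can be recovered. The crucial technical lever is the characterisation $\mathcal{M}^{\text{SD}}$ of ex-post efficient matchings as exactly those obtainable by Serial Dictatorship (stated in Section~\ref{sec:setup}); this lets me translate the abstract requirement ``$M \in \mathcal{M}^{\text{SD}}_k$'' into concrete structural restrictions on which agent-object pairs may be matched together, namely that no agent is unmatched (or matched below her preferences) while a more-preferred object retains capacity.

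The main obstacle I anticipate is the reverse direction, and specifically controlling the interaction between the exactness of the decomposition and ex-post efficiency. Because the decomposition must reproduce $X$ \emph{exactly} (as a convex combination with the prescribed marginals), the gadgets must be designed so that the only way to both hit every marginal $x_{ij}$ and keep every matching ex-post efficient and of cardinality $\geq k$ is to make a globally consistent ``yes/no'' choice in each gadget. The delicate part is ruling out decompositions that cheat by using many matchings with slightly different structures whose convex combination still yields $X$; I would address this by choosing the probabilities $x_{ij}$ (and the capacities $q$) so that the extreme points compatible with the cardinality-$k$ and ex-post-efficiency constraints are rigidly forced, typically by making certain marginals equal to values that admit essentially a unique integral support. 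Once the gadget is rigid in this sense, the equivalence between the existence of a maximin-optimal ex-post efficient decomposition and the source problem's solvability follows, completing the reduction and establishing $\mathcal{NP}$-hardness.
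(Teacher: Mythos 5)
Your proposal is a plan for a reduction, not a reduction: you never fix the source problem, never construct the instance $(N,O,>,q)$ and the assignment $X$, and never prove either direction of the equivalence. The part you yourself flag as ``where most of the work lies'' --- designing rigid gadgets so that no convex combination of ex-post efficient matchings can ``cheat'' its way to the marginals $x_{ij}$ --- is precisely the part that is absent. As it stands, the argument establishes nothing; it only describes the shape an argument might take. This is a genuine gap, not a presentational one.

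There is also a conceptual misdirection worth noting. You reason that, since the cardinality constraint alone is polynomially solvable (Theorem~\ref{theorem:impl_variab}), ``the hardness must come from the interaction between ex-post efficiency and the maximin cardinality objective.'' The paper's proof shows this is not where the hardness lives: it invokes a known result of \cite{aziz2015ex} that it is already $\mathcal{NP}$-complete to decide whether an arbitrary probabilistic assignment can be implemented over ex-post efficient matchings \emph{at all} (with unit capacities), i.e., the feasibility question with no cardinality requirement whatsoever. Since any algorithm solving MD-SD$(X)$ in particular decides whether a decomposition over $\mathcal{M}^{\text{SD}}$ exists, $\mathcal{NP}$-hardness follows in one line. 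So even if you completed your reduction, you would be doing substantially more work than necessary, and the natural repair for your proposal is not better gadgets but rather reducing from (or simply citing) the ex-post-efficiency implementability problem, letting the cardinality threshold play no role.
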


\begin{proof}
The result immediately follows from the fact that it is already $\mathcal{NP}$-complete to determine whether an arbitrary probabilistic assignment can be implemented over ex-post efficient matchings when objects have unit capacities \citep{aziz2015ex}. 
\end{proof}

Note that the proof of Theorem~\ref{theor:NP-hard_MD} does not apply for $X^{\text{RSD}}$, because $X^{\text{RSD}}$ is guaranteed to have at least one decomposition over ex-post efficient matchings, given by Equation~(\ref{eq:RSD}). Hence, the complexity status of MD-SD$(X^{\text{RSD}})$ is open.

\subsection{Bounds on the optimal value of \texorpdfstring{MD-SD($X^{\text{RSD}}$)}{MD-SD(X-RSD)}}
\label{subsec:bounding_MD-SD(X^RSD)}
By Theorem~\ref{theor:MD-P}, we know that the optimal value~$z(X)$ of MD-SD$(X)$ equals its upper bound~$\lfloor\mu(X)\rfloor$ when $X$ is robust ex-post efficient. In this section, we analyze the possible range of values of $z(X)$ when $X \notin \Delta \mathcal{M}^{\text{R}}$. More specifically, we focus on the RSD mechanism, given its practical and theoretical importance: we establish that $\frac{1}{2}\lfloor \mu(X^{\emph{RSD}})\rfloor$ is an asymptotically tight lower bound on $z(X^{\text{RSD}})$, whereas twice the worst-case number of assigned agents by any ex-post efficient matching is an asymptotically tight upper bound on $z(X^{\text{RSD}})$.

Denoting the cardinality of a minimum-cardinality ex-post efficient matching in a one-sided matching instance~$I \in \mathcal{I}$ by $p^-(I)$, which is equal to the worst-case number of assigned agents by the RSD algorithm, we can identify the following bounds on $z(X^{\text{RSD}(I)})$.

\begin{proposition}
    \label{prop:LB-RSD_1}
    $\frac{1}{2}\lfloor \mu(X^{\emph{RSD}(I)})\rfloor < z(X^{\emph{RSD}(I)}) < 2p^-(I)$, for all $I \in \mathcal{I}$.
\end{proposition}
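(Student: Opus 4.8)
The plan is to squeeze $z(X^{\text{RSD}(I)})$ between the cardinality of a minimum-cardinality ex-post efficient matching and the maximum cardinality of any matching, and then exploit the classical fact that these two quantities differ by at most a factor of two. Write $\mu^*(I)$ for the maximum number of agents that can be assigned by any feasible matching, counting only pairs $(i,j)$ with $j >_i \varnothing$. The backbone consists of two structural observations that I would establish first. On one hand, every ex-post efficient matching is a \emph{maximal} matching: if an agent $i$ were unassigned while some object $j$ with $j >_i \varnothing$ still had free capacity, then assigning $i$ to $j$ would make $i$ strictly better off and no agent worse off, contradicting ex-post efficiency. On the other hand, any maximal matching has cardinality at least $\tfrac12\mu^*(I)$, since (modelling capacities as slots) every edge of a maximum-cardinality matching must have an endpoint covered by the maximal matching, the standard factor-two bound. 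Applying both to a minimum-cardinality ex-post efficient matching $M^-$, which realises $\mu(M^-)=p^-(I)$, yields $\mu^*(I) \le 2\,p^-(I)$.

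Next I would relate $\mu(X^{\text{RSD}(I)})$ to these cardinalities. Since $X^{\text{RSD}(I)}$ is the uniform average of the matchings $\text{SD}(I,\sigma)$ over $\sigma \in \Sigma'$, and each such matching is ex-post efficient with cardinality between $p^-(I)$ and $\mu^*(I)$, averaging immediately gives
\begin{equation*}
p^-(I) \;\le\; \mu(X^{\text{RSD}(I)}) \;\le\; \mu^*(I).
\end{equation*}
For the two bounds on $z(X^{\text{RSD}(I)})$ I would then invoke two facts already available. First, the RSD decomposition of Equation~(\ref{eq:RSD}) writes $X^{\text{RSD}(I)}$ as a convex combination of ex-post efficient matchings, each of which assigns at least $p^-(I)$ agents; hence $X^{\text{RSD}(I)}$ is implementable over $\mathcal{M}^{\text{SD}}_{p^-(I)}$, so $z(X^{\text{RSD}(I)}) \ge p^-(I)$. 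Second, Proposition~\ref{prop:UB_MD} gives $z(X^{\text{RSD}(I)}) \le \lfloor \mu(X^{\text{RSD}(I)})\rfloor$. Chaining all of this with $\mu^*(I)\le 2p^-(I)$ produces
\begin{equation*}
\tfrac12\lfloor \mu(X^{\text{RSD}(I)})\rfloor \le \tfrac12\mu^*(I) \le p^-(I) \le z(X^{\text{RSD}(I)}) \le \lfloor\mu(X^{\text{RSD}(I)})\rfloor \le \mu^*(I) \le 2p^-(I),
\end{equation*}
which already delivers both inequalities non-strictly.

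It remains to upgrade the two outer inequalities to strict ones, and here I would treat the non-degenerate case $p^-(I)\ge 1$ (if $p^-(I)=0$ then no agent can ever be assigned and all three quantities collapse to $0$, so the strict statement should be read under this mild non-triviality assumption). The key observation is that $\mu(X^{\text{RSD}(I)})=\mu^*(I)$ can hold only if \emph{every} ordering $\sigma$ yields a maximum-cardinality matching, which forces $p^-(I)=\mu^*(I)$; combined with $\mu^*(I)\le 2p^-(I)$ this gives $p^-(I)=2p^-(I)$, i.e.\ $p^-(I)=0$, a contradiction. Hence $\mu(X^{\text{RSD}(I)}) < \mu^*(I) \le 2p^-(I)$, and together with $z \le \lfloor\mu\rfloor \le \mu$ this yields $z(X^{\text{RSD}(I)}) < 2p^-(I)$. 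An identical collapse argument rules out equality in the lower chain: $z(X^{\text{RSD}(I)})=\tfrac12\lfloor\mu(X^{\text{RSD}(I)})\rfloor$ would force all the middle inequalities to be equalities, again pushing $p^-(I)$ to $0$, so $z(X^{\text{RSD}(I)}) \ge p^-(I) > \tfrac12\lfloor\mu(X^{\text{RSD}(I)})\rfloor$.

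The main obstacle is the structural factor-two lemma $\mu^*(I)\le 2p^-(I)$, i.e.\ the two facts that ex-post efficient matchings are maximal and that maximal matchings retain at least half of the maximum cardinality; these should be proved carefully in the capacitated bipartite setting (or cited from the Pareto-optimality literature on house allocation). Everything else is bookkeeping around $\mu(X^{\text{RSD}(I)})$ together with Proposition~\ref{prop:UB_MD} and the decomposition in Equation~(\ref{eq:RSD}).
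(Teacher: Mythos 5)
Your overall architecture --- sandwiching $p^-(I) \le z(X^{\text{RSD}(I)}) \le \lfloor\mu(X^{\text{RSD}(I)})\rfloor$ and combining it with a factor-two bound --- is essentially the paper's own proof: the paper also starts from exactly this sandwich and then invokes the bound $p^+(I) \le 2p^-(I)$ of Abraham et al.\ (2004), which you re-derive from scratch via maximality of ex-post efficient matchings (that part of your argument is fine, just reproving a cited result). The genuine problem lies in your upgrade of the upper bound to a strict inequality. You claim that $\mu(X^{\text{RSD}(I)}) = \mu^*(I)$ leads to a contradiction: it forces $p^-(I) = \mu^*(I)$, and ``combined with $\mu^*(I)\le 2p^-(I)$ this gives $p^-(I)=2p^-(I)$.'' That step is a non sequitur: from $p^-(I) = \mu^*(I)$ and $\mu^*(I) \le 2p^-(I)$ you only get the vacuous $p^-(I) \le 2p^-(I)$; to reach $p^-(I) = 2p^-(I)$ you would need the \emph{equality} $\mu^*(I) = 2p^-(I)$, which is not available at that point. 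Indeed, the intermediate claim you then assert, $\mu(X^{\text{RSD}(I)}) < \mu^*(I)$, is false in general: take a single agent who finds a single unit-capacity object acceptable; every SD ordering produces the maximum matching, and $\mu(X^{\text{RSD}(I)}) = \mu^*(I) = p^-(I) = 1$.

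The gap is easily repaired, and the repair is precisely the paper's case distinction. If $\mu(X^{\text{RSD}(I)}) = \mu^*(I)$, then every SD matching is maximum, so $p^-(I) = \mu^*(I)$ and hence $z(X^{\text{RSD}(I)}) \le \lfloor\mu(X^{\text{RSD}(I)})\rfloor = p^-(I) < 2p^-(I)$ (using $p^-(I)\ge 1$); no contradiction is needed, the conclusion simply holds directly. If instead $\mu(X^{\text{RSD}(I)}) < \mu^*(I)$, then $z(X^{\text{RSD}(I)}) \le \lfloor\mu(X^{\text{RSD}(I)})\rfloor < \mu^*(I) \le 2p^-(I)$, as you argue. (Alternatively, run your ``collapse'' argument honestly as a contradiction from the hypothesis $z(X^{\text{RSD}(I)}) = 2p^-(I)$; that hypothesis does force $\mu^*(I) = 2p^-(I)$ as an equality along the chain, after which your implications go through.) Note that your lower-bound collapse argument is correct as written, exactly because there both $\lfloor\mu(X^{\text{RSD}(I)})\rfloor = \mu^*(I)$ and $\mu^*(I) = 2p^-(I)$ are forced simultaneously. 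Finally, your explicit handling of the degenerate case $p^-(I)=0$ is a point in your favour: the paper's proof (and, strictly speaking, the proposition itself) silently assumes $p^-(I)\ge 1$.
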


\begin{proof}
    By definition of MD-SD$(X^{\text{RSD}(I)})$, and given the upper bound on $z(X^{\text{RSD}(I)})$ from Proposition~\ref{prop:UB_MD}, we know that $p^-(I) \leq z(X^{\text{RSD}(I)}) \leq \lfloor \mu(X^{\text{RSD}(I)})\rfloor$. Hence, what remains to be shown is that $\frac{1}{2}\lfloor \mu(X^{\text{RSD}(I)})\rfloor < p^-(I)$. 
    
    Denote by~$p^+(I)$ the cardinality of a maximum-cardinality ex-post efficient matching in a one-sided matching instance~$I \in \mathcal{I}$, and note that $p^-(I) \leq p^+(I)$. We will consider two cases. First, if $p^-(I) = p^+(I)$, clearly $\frac{1}{2} \lfloor\mu(X^{\text{RSD}(I)}) \rfloor < p^-(I)$. Second, if $p^-(I) < p^+(I)$, we know that $\lfloor \mu(X^{\text{RSD}(I)}) \rfloor < p^+(I) \leq 2p^-(I)$, where the first inequality follows from Equation~(\ref{eq:RSD}), and the second inequality is a result by \cite{abraham2004pareto}.
\end{proof}

In the following two theorems, the proofs of which are included in Appendices \hyperref[appendix:proofThLB]{B} and \hyperref[appendix:proofThUB]{C}, we show that for each of the two strict inequalities in Proposition~\ref{prop:LB-RSD_1}, a family of instances exists for which both sides of the inequality converge. First of all, the following result states that there exists a family of instances for which $z(X^{\text{RSD}})$ cannot achieve any improvement in the worst-case number of assigned agents compared to the RSD algorithm, causing the ratio of $z(X^{\text{RSD}})$ over the upper bound $\lfloor \mu(X^{\text{RSD}})\rfloor$ to converge to $\frac{1}{2}$.

\begin{theorem}
    \label{theorem:LB-RSD_2}
    There exists a family of instances $I_k \in \mathcal{I}$, with $k\geq 2$ an integer, for which 
    \begin{equation*}
        \lim_{k \to \infty} {\frac{z(X^{\emph{RSD}(I_k)})}{\lfloor \mu(X^{\emph{RSD}(I_k)})\rfloor}}=\frac{1}{2}.
    \end{equation*}
\end{theorem}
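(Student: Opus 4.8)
The plan is to construct an explicit family of instances $I_k$ for which $z(X^{\text{RSD}(I_k)})$ stays pinned at its lower bound $p^-(I_k)$ (the minimum-cardinality ex-post efficient matching) while $\lfloor \mu(X^{\text{RSD}(I_k)})\rfloor$ grows like $2p^-(I_k)$, so that their ratio tends to $\tfrac12$. From Proposition~\ref{prop:LB-RSD_1} I already know $p^-(I)\le z(X^{\text{RSD}(I)})\le\lfloor\mu(X^{\text{RSD}(I)})\rfloor$, so it suffices to engineer instances where $z(X^{\text{RSD}(I_k)})=p^-(I_k)$ exactly (no improvement over RSD is possible), and simultaneously $\lfloor\mu\rfloor\approx 2p^-$. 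The key tension is that the upper-bound argument in Proposition~\ref{prop:LB-RSD_1} uses the result $p^+(I)\le 2p^-(I)$ from \cite{abraham2004pareto}; the extremal family should therefore make this bound tight, i.e.\ have a large maximum-cardinality ex-post efficient matching but a small minimum-cardinality one, while forcing every matching in any valid decomposition of $X^{\text{RSD}}$ to be forced down to cardinality $p^-$.

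\textbf{Constructing the family.} I would build $I_k$ out of $k$ identical \enquote{gadgets}, each a small copy of the Example~\ref{ex:maximin}-style instance in which some agents find only one object acceptable (everything else below the outside option) and others rank that same object first but have a fallback. The goal of each gadget is that an ex-post efficient matching can assign as few as one agent but, via a different tie-breaking order, as many as two. Concretely I would design each gadget so that $p^-$ contributes $1$ and $p^+$ contributes $2$ per gadget, giving $p^-(I_k)=k$ and $p^+(I_k)=2k$. The RSD assignment $X^{\text{RSD}(I_k)}$ then has $\mu(X^{\text{RSD}(I_k)})$ strictly between $k$ and $2k$, approaching $2k$ as the gadget is tuned (by making the \enquote{lucky} orderings dominate the expectation), so that $\lfloor\mu\rfloor$ behaves like $2k-o(k)$.

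\textbf{Forcing $z=p^-$.} The crux is to show that every ex-post efficient matching in $I_k$ that appears in \emph{any} exact decomposition of $X^{\text{RSD}}$ is forced to have cardinality exactly $p^-(I_k)=k$, so that the worst-case matching cannot be avoided. The argument I would use is a support/feasibility argument on the marginals: because $X^{\text{RSD}}$ places strictly positive probability on the \enquote{bad} agent-object entries in each gadget (the ones that RSD realizes under the unlucky orderings), any convex combination summing to $X^{\text{RSD}}$ must include matchings that realize those entries; and in each such matching ex-post efficiency forces the corresponding gadget down to a single assigned agent. Making this watertight — ruling out a clever decomposition that trades assignments across gadgets to keep every matching large — is where the gadgets must be chosen to be \emph{independent}, so that the constraint in one gadget cannot be compensated by another. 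I would verify independence by checking that the objects and acceptable sets in distinct gadgets are disjoint.

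\textbf{Taking the limit.} Once the gadget parameters are fixed, I would compute $\mu(X^{\text{RSD}(I_k)})$ exactly (a finite sum over the $|\Sigma'|$ orderings, which factorizes across the independent gadgets), confirm $z(X^{\text{RSD}(I_k)})=k$, and then evaluate
\begin{equation*}
\frac{z(X^{\text{RSD}(I_k)})}{\lfloor\mu(X^{\text{RSD}(I_k)})\rfloor}=\frac{k}{\lfloor\mu(X^{\text{RSD}(I_k)})\rfloor}\longrightarrow\frac12
\end{equation*}
as $k\to\infty$, since the denominator grows like $2k$. \textbf{The main obstacle} I anticipate is the forcing step: proving that \emph{no} exact decomposition of $X^{\text{RSD}}$ over ex-post efficient matchings can do better than the RSD algorithm itself. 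The natural route is to identify a specific agent-object cell in each gadget whose RSD probability is positive but which, whenever realized by an ex-post efficient matching, pins that gadget's cardinality to its minimum; combined across the independent gadgets, this caps the worst-case matching at $p^-(I_k)$. The gadget design has to be delicate enough that this cell simultaneously carries positive RSD weight and encodes the efficiency obstruction, while the gadgets remain independent so the caps add up.
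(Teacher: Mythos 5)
There is a genuine gap, and it sits exactly where you flagged the risk: the claim that per-gadget forcing, \emph{combined across independent gadgets}, caps the worst-case matching at $p^-(I_k)$ is false --- independence is precisely what breaks it. A decomposition only has to reproduce the marginals $x_{ij}$, not the joint (product) distribution that RSD induces across gadgets, so it is free to \emph{correlate} the bad events however it likes, and in particular to spread them out. Concretely, take the most natural gadget meeting all your desiderata: agents $\{1_g,2_g\}$, unit-capacity objects $\{a_g,b_g\}$, preferences $a_g >_{1_g} b_g >_{1_g} \varnothing$ and $a_g >_{2_g} \varnothing$. The cell $(1_g,a_g)$ has RSD probability $\tfrac12$, and any ex-post efficient matching realizing it assigns only one agent in gadget $g$ (since $2_g$ accepts only $a_g$); gadgets are disjoint. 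Yet with $k$ copies, the forcing only pins the total badness \emph{mass}: each gadget must be bad with weight $\tfrac12$, so one can build a decomposition in which each matching is bad in roughly half the gadgets and good in the rest (for $k=2$: two matchings of cardinality $3$, weights $\tfrac12$ each, which reproduce all marginals). Then $z(X^{\text{RSD}(I_k)}) \approx \tfrac{3k}{2} = \mu(X^{\text{RSD}(I_k)})$, so the ratio tends to $1$, not $\tfrac12$. This is generic: since the weighted average number of bad gadgets per matching equals the total forced mass, spreading achieves a worst case that tracks the \emph{mean} $\lfloor\mu\rfloor$ rather than the minimum $p^-$; and your tuning ``make lucky orderings dominate'' only shrinks the forced mass and makes spreading easier. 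The spreading phenomenon is not hypothetical --- it is exactly what Example~\ref{ex:maximin} and the proof of Theorem~\ref{theorem:UB-RSD_2} exploit, where RSD's worst case $p^-(I_\ell)=\ell$ is beaten by a decomposition all of whose matchings have cardinality $2\ell-1$.

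What the paper does instead is irreducibly global rather than gadget-based: the badness must be concentrated in a \emph{single} matching that every decomposition is forced to contain. Its instance $I_k$ has $k^2$ agents all ranking one object $o_1$ of capacity $k$ first; only agents $1,\dots,k$ have fallbacks $o_2 > \dots > o_{k+1}$ (unit capacities), while the other $k^2-k$ agents prefer $\varnothing$ to everything but $o_1$. The unique minimum ex-post efficient matching $M^0$ packs agents $1,\dots,k$ into $o_1$ and assigns nobody else (cardinality $k$), and every other ex-post efficient matching has cardinality at least $k+1$ and assigns \emph{exactly one} agent $i\leq k$ to $o_2$. The forcing is then a weight-counting argument on the complement of $M^0$, not a cell-support argument: in any decomposition, the total weight of matchings in $\mathcal{M}^{\text{SD}}_{k+1}$ equals $\sum_{i\leq k} x^{\text{RSD}(I_k)}_{i,o_2} = 1-\binom{k^2}{k}^{-1} < 1$, so $M^0$ must carry the remaining weight $\binom{k^2}{k}^{-1}>0$, giving $z(X^{\text{RSD}(I_k)})=k$ while $\lfloor\mu(X^{\text{RSD}(I_k)})\rfloor = 2k-1$, whence the ratio $\tfrac{k}{2k-1}\to\tfrac12$. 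The coupling of all agents through the single bottleneck object is what makes spreading impossible; no construction decomposable into disjoint gadgets can achieve the $\tfrac12$ limit.
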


Secondly, we show that there exist instances where $z(X^{\text{RSD}})$ realizes the maximum possible improvement in the worst-case number of assigned agents with respect to the RSD algorithm, as the ratio of $z(X^{\text{RSD}})$ over $p^-(I)$ converges to 2.

\begin{theorem}
    \label{theorem:UB-RSD_2}
    There exists a family of instances $I_\ell \in \mathcal{I}$, with $\ell\geq 2$ an integer, for which 
    \begin{equation*}
        \lim_{\ell \to \infty} {\frac{z(X^{\emph{RSD}(I_\ell)})}{p^-(I_\ell)}}=2.
    \end{equation*}
\end{theorem}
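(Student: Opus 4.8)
The plan is to exhibit an explicit family $I_\ell$ for which the minimum-cardinality ex-post efficient matching is small while the Random Serial Dictatorship assignment can be re-decomposed into ex-post efficient matchings that are almost twice as large. Concretely, I would take $\ell$ \emph{greedy} agents $g_1,\dots,g_\ell$ and $m=\ell^2$ \emph{needy} agents $n_1,\dots,n_m$, together with $\ell$ unit-capacity objects $a_1,\dots,a_\ell$ and one object $c$ of capacity $\ell$. The preferences are $c >_{g_i} a_i >_{g_i} \varnothing$ for each greedy agent and $c >_{n_j} \varnothing$ for each needy agent, so that $c$ is the unanimous first choice while $a_i$ is wanted only by $g_i$. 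First I would pin down $p^-(I_\ell)$. Since the total capacity $2\ell$ is strictly below the number of agents, any Pareto-optimal matching must fill $c$ (otherwise an unassigned agent, who necessarily prefers $c$ to $\varnothing$, could be added). Hence every ex-post efficient matching assigns the $\ell$ seats of $c$ plus some subset of the greedy agents to their private objects $a_i$, so its size lies between $\ell$ (all greedy agents inside $c$) and $2\ell$ (all greedy agents on their $a_i$, needy agents filling $c$). This gives $p^-(I_\ell)=\ell$ and a maximum of $2\ell$, both extremes being attainable and ex-post efficient.

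Next I would compute $X^{\text{RSD}(I_\ell)}$. Because $c$ is every agent's top choice and has capacity $\ell$, the serial dictatorship for any ordering fills $c$ with exactly the first $\ell$ agents; a greedy agent outside $c$ always receives her private object $a_i$, whereas a needy agent outside $c$ stays unassigned. By symmetry this yields $x_{g_i c}=x_{n_j c}=\frac{\ell}{\ell+m}$ and $x_{g_i a_i}=\frac{m}{\ell+m}$, and these are the only nonzero entries of $X^{\text{RSD}}$.

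Then I would construct a decomposition over ex-post efficient matchings whose sizes are all at least $2\ell-1$, i.e.\ in which at most one greedy agent ever sits in $c$. I use two types of matchings: type A places all greedy agents on their $a_i$ and a uniformly random $\ell$-subset of needy agents in $c$ (size $2\ell$); type B, for a fixed $g_i$, places $g_i$ and a uniformly random $(\ell-1)$-subset of needy agents in $c$ while the remaining greedy agents take their $a_j$ (size $2\ell-1$). Assigning total weight $\frac{\ell}{\ell+m}$ to the type-B family of each $g_i$ (so total weight $P_B=\frac{\ell^2}{\ell+m}<1$, using $m=\ell^2$) and the remaining weight $P_A=1-P_B$ to type A reproduces the greedy marginals exactly; the needy marginals follow from the averaging identity $\frac{1}{m}\bigl(\ell P_A+(\ell-1)P_B\bigr)=\frac{1}{m}\bigl(\ell-P_B\bigr)=\frac{\ell}{\ell+m}$, so the convex combination equals $X^{\text{RSD}}$. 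All these matchings are ex-post efficient (each fills $c$ and keeps every greedy agent either in $c$ or on her own $a_i$), so $z(X^{\text{RSD}(I_\ell)})\ge 2\ell-1$. Combined with the strict upper bound $z<2p^-$ from Proposition~\ref{prop:LB-RSD_1}, this yields $2-\frac{1}{\ell}\le \frac{z(X^{\text{RSD}(I_\ell)})}{p^-(I_\ell)}<2$, whence the limit equals $2$.

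I expect the main obstacle to be the third step: verifying that the proposed convex combination reproduces $X^{\text{RSD}}$ \emph{exactly} rather than merely on average. The greedy entries are immediate, but the needy entries require that the uniform choice of which needy agents fill $c$ be symmetric across all $m$ of them, and that the weights $P_A,P_B$ be genuinely nonnegative (hence the need for $m$ large enough that $P_B<1$). Checking that these two families of matchings have exactly the same support as $X^{\text{RSD}}$, so that matching every entry forces equality of the two probabilistic assignments, is the only delicate bookkeeping step; the ex-post efficiency of each matching and the cardinality bounds are then routine.
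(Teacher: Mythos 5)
Your proof is correct, but it takes a genuinely different route from the paper's. The paper's family has only two objects, $o_1$ and $o_2$, each of capacity $\ell$, with $\ell^2$ agents of which the first $\ell$ rank $o_1 > o_2 > \varnothing$ and the rest rank $o_1 > \varnothing$; its decomposition is fully explicit and finite, consisting of exactly $\ell$ matchings of equal weight $\sfrac{1}{\ell}$, each of cardinality exactly $2\ell-1$, where the filler agents assigned to $o_1$ in each matching are specified by an explicit index partition. Your family instead uses a common object $c$ of capacity $\ell$ plus $\ell$ private unit-capacity objects $a_i$, and your decomposition mixes two symmetric families (sizes $2\ell$ and $2\ell-1$) whose members are indexed by all $\ell$- or $(\ell-1)$-subsets of the needy agents, so it has exponential support; this is harmless, and the symmetry makes the marginal computations and the ex-post efficiency checks essentially automatic, avoiding the paper's index bookkeeping. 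A second difference is in how the matching upper bound is closed: the paper pins down $z(X^{\text{RSD}(I_\ell)}) = 2\ell-1$ via $\lfloor \mu(X^{\text{RSD}(I_\ell)}) \rfloor = 2\ell - 1$ (Proposition~\ref{prop:UB_MD}), whereas you invoke the strict inequality $z < 2p^-$ from Proposition~\ref{prop:LB-RSD_1} and squeeze $2 - \sfrac{1}{\ell} \leq z/p^- < 2$, which means you never need the exact value of $z$ nor of $\mu$. Your verification of the three marginals ($x_{g_i c}$, $x_{g_i a_i}$, $x_{n_j c}$), of the weight identities $P_A + P_B = 1$ with $P_B = \sfrac{\ell^2}{(\ell + m)} < 1$, and of Pareto efficiency of both matching types (every matching fills $c$ and assigns every greedy agent) is sound, so the argument is complete as proposed.
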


Lastly, we provide some intuition about the relation between the expected number of assigned agents by the RSD mechanism and the maximum number of assigned agents by any ex-post efficient matching in an instance $I\in\mathcal{I}$, which we denote by $p^+(I)$. Following a similar reasoning as in the proof of Proposition~\ref{prop:LB-RSD_1}, we know that when $p^-(I) < p^+(I)$ it must hold that $1<\frac{p^+(I)}{\mu(X^{\text{RSD}(I)})}<2$. First, the family of instances~$I_k$ as defined in the proof of Theorem~\ref{theorem:LB-RSD_2} shows that this intuitive lower bower bound is asymptotically tight. Second, this intuitive upper bound can be strengthened for larger instances, as discussed by \cite{bhalgat2011social} and \cite{krysta2019size} for RSD, and by \cite{bogomolnaia2015size} for PS.

\section{Column generation}
\label{sec:column_gen}
In order to be able to solve MD-SD$(X)$ to optimality for reasonably large instances when $X$ is not robust ex-post efficient, we introduce two alternative \textit{column generation} frameworks. Recall that the total number of tie-breaking rules $|\Sigma'|$ that can be used as an input for the SD mechanism to obtain an ex-post efficient matching is equal to $|N|!$, which implies that the number of ex-post efficient matchings $|\mathcal{M}^{\text{SD}}|$ can be exponential in the number of agents. A linear programming model that determines the weights of each matching in the decomposition would therefore contain an exponential number of decision variables. A common approach to tackle this issue is \textit{column generation} (e.g., \cite{bertsimas1997introduction}). The idea is to first solve the model with a subset of the variables in the \textit{restricted master problem}. Subsequently, we check whether this leads to an optimal solution over all variables. If this is not the case, a new variable will be added to the model, which is then solved again until the restricted master problem returns an optimal solution over all variables. New variables are generated by a separate subproblem called the \textit{pricing problem}, which identifies a variable with a negative \textit{reduced cost} or shows that no such variable exists.

In the remainder of this section, we propose two alternative column generation frameworks that will tackle the problem of finding an optimal solution to MD-SD$(X)$ from different angles. In both frameworks, we solve MD-SD$(X)$ by means of a binary search for $z(X)$, in which each step consists of solving a linear program to check whether a probabilistic assignment $X$ is implementable over $\mathcal{M}^{\text{SD}}_k$ for some integer~$k$. In Sections~\ref{subsec:RMP} and~\ref{subsec:pricing_problem}, by only using matchings in~$\mathcal{M}^{\text{SD}}_k$, we will minimize the largest difference between~$X$ and the resulting probabilistic assignment. Alternatively, in Section~\ref{subsec:alternative_formulation}, we enforce the resulting decomposition to be exactly equal to~$X$ while only using ex-post efficient matchings, and we will maximize the cumulative weights of the matchings in~$\mathcal{M}^{\text{SD}}_k$ in this decomposition. Although both frameworks are similar, their optimal solutions differ when~$X$ is not implementable over~$\mathcal{M}^{\text{SD}}_k$ for some integer~$k$. The performance and the interpretations of both frameworks will be discussed in Section~\ref{subsec:observ}.

\subsection{Minimizing the deviation from~\texorpdfstring{$X$}{X}: restricted master problem}
\label{subsec:RMP}
In this section, we formulate the restricted master problem $\left[\text{RMP}\right]$ for a column generation framework that aims to find a decomposition of a probabilistic assignment~$X$ in which all matchings are ex-post efficient and assign at least~$k$ agents to an object, for a given value of~$k$ in the binary search. To guide this search, we will minimize the largest difference over all agent-object pairs between~$X$ and the resulting probabilistic assignment. Let $X = [x_{ij}]$, with $(i,j) \in N\times O$, be the probabilistic assignment we want to decompose, and let $\tilde{\mathcal{M}}^{\text{SD}}_k \subseteq \mathcal{M}^{\text{SD}}_k$ be the subset of the variables that are considered in the restricted master problem. Moreover, define decision variables $\lambda^{t}$ to be the weight of matching $M^{t}\in 
\tilde{\mathcal{M}}^{\text{SD}}_k$ in the decomposition of~$X$, and let $s$ be an auxiliary variable.
\begin{subequations}
	\begin{align}
	&\left[\text{RMP}\right] & &\min &s &&\\ 
	& & &\: \text{s.t.} &\sum_{t:M^t \in \tilde{\mathcal{M}}^{\text{SD}}_k} \lambda^t m_{ij}^{t} - x_{ij} &\geq 0  &\forall \; (i,j) \in N \times O,\label{con:rmp2_1}\\
	& & & &\sum_{t:M^t \in \tilde{\mathcal{M}}^{\text{SD}}_k} \lambda^t m_{ij}^{t} - x_{ij} &\leq s  &\forall \; (i,j) \in N \times O, \label{con:rmp2_2}\\
	& & & &\sum_{t:M^t \in \tilde{\mathcal{M}}^{\text{SD}}_k} \lambda^t &= 1,&\label{con:rmp2_3}\\
	& & & &\lambda^{t} &\geq 0  &\forall \; t: M^t \in \tilde{\mathcal{M}}^{\text{SD}}_k. \label{con:rmp2_4}
	\end{align}
\end{subequations}

Constraints~(\ref{con:rmp2_1}) impose that the probability of agent~$i$ being assigned to object~$j$ is at least as high in the solution as in the original probabilistic assignment~$X$. Constraints~(\ref{con:rmp2_2}) ensure, together with the objective function, that~$s$ will be equal to the maximum difference between the constructed probabilistic assignment and $X$, over all agent-object pairs. As a result, a feasible decomposition of~$X$ over $\tilde{\mathcal{M}}^{\text{SD}}_k$ has been found if the objective value of $\left[\text{RMP}\right]$ equals zero. Furthermore, constraints~(\ref{con:rmp2_3}) and (\ref{con:rmp2_4}) state that the weights in the decomposition are non-negative and sum to one. To ensure the feasibility of $\left[\text{RMP}\right]$ for any subset~$\tilde{\mathcal{M}}^{\text{SD}}_k$, we can include an artificial \textit{super-column}~$M^0$ in which $m^0_{ij} = 1$ for all agent-object pairs $(i,j) \in N \times O$. Moreover, to reduce the number of constraints, we can remove redundant constraints~(\ref{con:rmp2_1}) when $x_{ij} = 0$, and constraints~(\ref{con:rmp2_2}) when $x_{ij} = 1$.

Denote the dual variables related to the constraints (\ref{con:rmp2_1}), (\ref{con:rmp2_2}), and (\ref{con:rmp2_3}) by $u_{ij} \geq 0$, $v_{ij} \leq 0$, and $w$, respectively, for each $(i,j) \in N \times O$. Then a solution of $\left[\text{RMP}\right]$ with dual values $u^* = [u^{*}_{ij}]$, $v^* = [v^{*}_{ij}]$, and $w^*$ is optimal over all variables in $\mathcal{M}_k^{\text{SD}}$ if no matching $M\in \mathcal{M}_k^{\text{SD}}$ has a negative reduced cost, i.e., if for all matchings $M \in \mathcal{M}_k^{\text{SD}}$ 
\begin{equation}
\label{con:violation_dual}
\sum_{(i,j) \in N \times O} -m^t_{ij} (u^*_{ij} + v^*_{ij}) - w^* \geq 0.
\end{equation}

\subsection{Minimizing the deviation from~\texorpdfstring{$X$}{X}: pricing problem}
\label{subsec:pricing_problem}
In order to determine the existence of a matching that violates (\ref{con:violation_dual}), we formulate a pricing problem to find a feasible ex-post efficient matching $M \in \mathcal{M}_k^{\text{SD}}$ that minimizes the left-hand side of inequality~(\ref{con:violation_dual}) for given dual values $u^*$, $v^*$, and $w^*$. This pricing problem contains two sets of constraints: constraints (\ref{con:pp01})-(\ref{con:pp04}) ensure that the constructed matching is feasible and assigns at least $k$ agents to an object, while constraints (\ref{con:pp_Biro1})-(\ref{con:pp_Biro9}) guarantee that the matching is ex-post efficient. 

First, define binary decision variables $m_{ij}$, where $m_{ij} = 1$ if agent~$i \in N$ is assigned to object~$j\in O$, and $m_{ij} = 0$ otherwise. Then, the following integer linear program~$\left[\text{PP\textsubscript{0}}\right]$ will find a minimum-weight matching that assigns at least $k$ agents to an object.
\begin{subequations}
	\begin{align}
	&\left[\text{PP\textsubscript{0}}\right] & &\min & \sum_{i \in N} \sum_{j \in O} -m_{ij} \big(u^*_{ij} &+ v^*_{ij}\Big) - w^*  & \label{con:pp_obj}\\ 
	& & &\: \text{s.t.} &\sum_{j \in O} m_{ij}&\leq 1 &\forall \; i \in N,\label{con:pp01}\\
	& & & &\sum_{i \in N} m_{ij} &\leq q_j &\forall \; j \in O, \label{con:pp02}\\
	& & & &\sum_{i \in N} \sum_{j \in O} m_{ij}&\geq k,\label{con:pp03}\\
	& & & &m_{ij} &\in \{0,1\} &\forall \; (i,j) \in N \times O. \label{con:pp04}
	\end{align}
\end{subequations}
The objective function of $\left[\text{PP}_0\right]$ minimizes the left-hand side of inequality~(\ref{con:violation_dual}), and will find a matching with a negative reduced cost if and only if the objective value is strictly negative. In order to reduce the number of variables in~$\left[\text{PP}_0\right]$ we can fix $m_{ij}$ to one (zero) if agent~$i$ is always (never) assigned to object~$j$ by the probabilistic assignment~$X$ that we want to decompose. Note that $\left[\text{PP\textsubscript{0}}\right]$ is, except for the inequality in constraint~(\ref{con:pp03}), identical to the formulation by \cite{dell1997k} for the $k$-\textsc{cardinality Assignment Problem} ($k$-AP), which finds a minimum-weight matching that assigns exactly $k$ agents to an object. They showed that $k$-AP is polynomially solvable because of the total unimodularity of the constraint matrix. The same holds for $\left[\text{PP\textsubscript{0}}\right]$ if we do not require the final matching to be ex-post efficient.

Additionally, we enforce the matching to be ex-post efficient by appending to $\left[\text{PP\textsubscript{0}}\right]$ the constraints that were recently formulated by \cite{biro2021complexity}, slightly simplified to the setting with strict agent preferences. They showed that a matching $M\in\mathcal{M}$ is ex-post efficient if and only if there exists a vector of prices $p\in\{0,\ldots,|O|\}^{|O|}$ such that $(M,p)$ is a \textit{competitive equilibrium}. We say that $(M,p)$ is a competitive equilibrium if it satisfies the following two conditions:
\begin{enumerate}[label=(\roman*),leftmargin=*,labelindent=0pt]
    \itemsep 0em
    \item each object $j\in O$ with remaining capacity has a price equal to zero: $q_j - \sum_{i\in N}m_{ij} > 0 \Rightarrow p_j = 0$;
    \item each object $k\in O$ that is preferred by at least one agent to its assigned object $j\in O$ has a strictly higher price than~$j$: $\exists i\in N: k >_i j = M(i) \Rightarrow p_k > p_j$.
\end{enumerate}
Note that we do not require the third set of conditions by \cite{biro2021complexity}, because we assume agent preferences to be strict. Alternatively, we have also tried to enforce ex-post efficiency by translating the three necessary and sufficient conditions for ex-post efficiency by \cite{abraham2004pareto} into linear constraints, but the recent formulation by Biró and Gudmundsson outperformed this alternative formulation.

In order to implement the conditions for a competitive equilibrium in a linear model, we introduce some additional decision variables. Using the notation by Bir\'{o} and Gudmundsson, define integer variables $p_j\in\{0,\ldots,|O|\}$ which represent the price for each object $j\in O$. Moreover, define an auxiliary variable $f_j\in\{0,1\}$ for each object $j \in O$, and auxiliary variables $s_{jk} \in\mathbb{N}$ and $\tilde{s}_{jk} \in\{0,1\}$ for each object pair $\{j,k\}\subseteq O$. Denoting the set of objects that agent~$i$ prefers to the outside option by $O_i \subseteq O$, we append the following constraints to $\left[\text{PP}_0\right]$.
\addtocounter{equation}{-1}
\begin{subequations}
\addtocounter{equation}{5}
	\begin{align}
	& \left[\text{PP}_{\text{PE}}\right] & \sum_{i \in N:k>_i j} m_{ij} &= s_{jk}  &\forall \; \{j,k\} \subseteq O, \label{con:pp_Biro1}\\
	& & \tilde{s}_{jk} &\leq s_{jk} &\forall \; \{j,k\} \subseteq O, \label{con:pp_Biro2}\\
	& & \tilde{s}_{jk} \cdot |N|&\geq s_{jk} &\forall \; \{j,k\} \subseteq O, \label{con:pp_Biro3}\\
	& & f_jq_j &\leq \sum_{i \in N} m_{ij} &\forall \; j \in O, \label{con:pp_Biro4}\\
	& & f_j + q_j &\geq \sum_{i \in N} m_{ij} + 1 &\forall \; j \in O, \label{con:pp_Biro5}\\
	& & \sum_{l\in O_i} m_{il} + f_j &\geq 1 &\forall \; (i,j) \in N\times O_i, \label{con:pp_Biro6}\\
	& & f_j \cdot |O| &\geq p_j &\forall \; j \in O, \label{con:pp_Biro7}\\
	& & (1-\tilde{s}_{jk})(|O| + 1) + p_k &\geq p_j + 1 &\forall \; \{j,k\} \subseteq O, \label{con:pp_Biro8}\\
	& & p_j &\in \mathbb{N} &\forall \; j \in O. \label{con:pp_Biro9}\end{align}
\end{subequations}
In summary, constraints (\ref{con:pp_Biro1})-(\ref{con:pp_Biro5}) define the auxiliary variables, constraints (\ref{con:pp_Biro6}) enforce the resulting matching to be maximal, and constraints (\ref{con:pp_Biro7}) and~(\ref{con:pp_Biro8}) ensure that $(M,p)$ is a competitive equilibrium. More specifically, the first set of constraints~(\ref{con:pp_Biro1}) sets~$s_{jk}$ equal to the number of agents that prefer object~$k$ to their assigned object~$j$. Secondly, constraints~(\ref{con:pp_Biro2}) and~(\ref{con:pp_Biro3}) ensure that~$\tilde{s}_{jk}$ equals one if $s_{jk}$ is strictly positive, and that $\tilde{s}_{jk}$ equals zero otherwise. Thirdly, constraints~(\ref{con:pp_Biro4}) and (\ref{con:pp_Biro5}) ensure that~$f_j$ equals zero when object~$j\in O$ has unused capacity, and that $f_j$ equals one otherwise. Constraints~(\ref{con:pp_Biro6}) enforce the matching to be maximal: if agent~$i\in N$ is not assigned to any object, then none of the objects that she prefers to the outside option can have unused capacity. Lastly, constraints~(\ref{con:pp_Biro7}) force the price of object $j\in O$ to be zero when it has unused capacity, while constraints~(\ref{con:pp_Biro8}) impose $p_k \geq p_j + 1$ whenever~$\tilde{s}_{jk} = 1$. Note that adding the integrality conditions~(\ref{con:pp_Biro9}) for~$p_j$ is not required for the correctness of the formulation, but we found that doing so results in lower computation times.

While the basic pricing problem $\left[\text{PP}_0\right]$ can be solved in polynomial time, the addition of constraints (\ref{con:pp_Biro1})-(\ref{con:pp_Biro9}) to $\left[\text{PP}_0\right]$ causes the problem to become computationally hard. This follows directly from the $\mathcal{NP}$-hardness of finding an ex-post efficient matching of maximum weight \citep{biro2021complexity}. 

\subsection{Maximizing the fraction of large matchings}
\label{subsec:alternative_formulation}
In this section, we propose an alternative procedure that obtains an exact decomposition of the probabilistic assignment~$X$ over ex-post efficient matchings, while minimizing the cumulative weight of the matchings in the decomposition that assign less than~$k$ agents to an object. 

Consider the same notation as in Section~\ref{subsec:RMP}, and let $k\in\mathbb{N}$ be the the number of assigned agents for which we are checking whether a given probabilistic assignment~$X\in\Delta\mathcal{M}$ can be decomposed over~$\mathcal{M}^\text{SD}_k$. Denote the cumulative weight of the matchings in the decomposition that assign at least~$k$ agents to an object by decision variable~$\alpha\in[0,1]$. We can slightly adapt $\left[\text{RMP}\right]$ to model the restricted master problem for the alternative formulation as follows.
\begin{subequations}
	\begin{align}
	&\left[\alpha\text{RMP}\right] & &\max &\alpha &&\\ 
	& & &\: \text{s.t.} &\sum_{t:M^t \in \tilde{\mathcal{M}}^{\text{SD}}} \lambda^t m_{ij}^{t} &= x_{ij}  &\forall \; (i,j) \in N \times O,\label{con:alpha_rmp_1}\\
	& & & &\sum_{t:M^t\in\tilde{\mathcal{M}}^{\text{SD}}_k}\lambda^t - \alpha &\geq 0,&\label{con:alpha_rmp_2}\\
	& & & &\sum_{t:M^t \in \tilde{\mathcal{M}}^{\text{SD}}} \lambda^t &= 1,&\label{con:alpha_rmp_3}\\
	& & & &\lambda^{t} &\geq 0  &\forall \; t: M^t \in \tilde{\mathcal{M}}^{\text{SD}}. \label{con:alpha_rmp_4}
	\end{align}
\end{subequations}
Note that constraints (\ref{con:alpha_rmp_1}), (\ref{con:alpha_rmp_3}), and (\ref{con:alpha_rmp_4}) are equivalent to their counterparts in $\left[\text{RMP}\right]$ where~$\tilde{\mathcal{M}}^\text{SD}_k$ is replaced by~$\tilde{\mathcal{M}}^\text{SD}$. Denote the dual variables of constraints~(\ref{con:alpha_rmp_1}), (\ref{con:alpha_rmp_2}), and (\ref{con:alpha_rmp_3}) by~$u_{ij}$, $v\leq-1$, and~$w$ respectively, for each $(i,j)\in N\times O$. Given a solution of $\left[\alpha\text{RMP}\right]$ with dual variables $u^*=[u^*_{ij}]$, $v^*$ and~$w^*$, this solution is optimal if no matching~$M\in\mathcal{M}^\text{SD}$ has a negative reduced cost, i.e., if
\begin{equation}
\label{eq:alternative_pricing_reduced_cost}
\begin{cases}
    \sum\limits_{(i,j)\in N\times O}m_{ij}u^*_{ij} + w < 0 &\forall \; M\in\mathcal{M}^{\text{SD}}\setminus \mathcal{M}^{\text{SD}}_k,\\
    \sum\limits_{(i,j)\in N\times O}m_{ij}u^*_{ij} + v + w< 0 &\forall \; M\in\mathcal{M}^{\text{SD}}_k.
\end{cases}
\end{equation}
Hence, two pricing problems should be solved to identify whether a solution for $\left[\alpha\text{RMP}\right]$ is optimal. The first pricing problem minimizes the first expression in~(\ref{eq:alternative_pricing_reduced_cost}) under constraints (\ref{con:pp01})-(\ref{con:pp_Biro9}), but without cardinality constraint~(\ref{con:pp03}). The second pricing problem simply minimizes the second expression in~(\ref{eq:alternative_pricing_reduced_cost}) under constraints (\ref{con:pp01})-(\ref{con:pp_Biro9}). If any of the pricing problems finds a matching with a negative objective value, it is added to~$\tilde{\mathcal{M}}^{\text{SD}}$ and to~$\tilde{\mathcal{M}}^{\text{SD}}_k$ if the matching assigns at least~$k$ agents, and $\left[\alpha\text{RMP}\right]$ is solved again. If the objective value for both pricing problems is non-negative, the found solution for $\left[\alpha\text{RMP}\right]$ is optimal.

\subsection{Flexibility of the column generation framework}
\label{subsec:flexibility}
    In this subsection, we discuss how the inherent flexibility of linear (integer) programming methods can be exploited to adapt the proposed column generation frameworks to find decompositions of probabilistic assignments in different problem settings. As such, our column generation framework can be readily adapted to alternative assignment settings where ties have to be broken, or where probabilistic solution concepts are preferred. Next to RSD and PS, interesting probabilistic assignments to be decomposed are proposed by \cite{ashlagi2016optimal}, \cite{ashlagi2020assignment}, \cite{kavitha2011popular}, \cite{kesten2017efficient}, and \cite{shi2022optimal}.
    
    Changes in the problem setting can be classified into two categories. First, the decision-maker could require the matchings in the decomposition to satisfy other ex-post properties than ex-post efficiency. Examples of possible alternative properties are quota, stability notions, or enforcing marriage partners to be matched together \citep[e.g.,][]{bronfman2018redesigning}. These alternative properties can be easily enforced by including the relevant constraints in the pricing problem. Note, however, that it might be possible that a given probabilistic assignment is not decomposable over a set of matchings with the desired properties \citep[e.g.,][Theorem 4.1]{bronfman2018redesigning}. In that case, the column generation framework $[\text{RMP}]$ will output a decomposition in which the largest difference between the resulting allocation probabilities and the initial allocation probability is minimized over all agent-object pairs. By changing the objective function of $[\text{RMP}]$, other ways of approximating the initial allocation probabilities can also be modeled.
    
    Second, the decision-maker could be interested in finding a decomposition that maximizes an alternative worst-case measure. Examples of alternative measures include a minimization of the worst-case total travel distance to the schools \citep[e.g.,][]{biro2021complexity}, a minimization of some worst-case \textit{unpopularity} measure \citep[e.g.,][]{kondratev2022minimal, mccutchen2008least}, or an optimization of the worst-case average assigned preference or utility of the agents. These alternative worst-case measures can be easily applied by modifying the binary search procedure. More specifically, we replace constraint~(\ref{con:pp03}) by a set of linear constraints to bound the matchings with respect to the evaluated value of the alternative worst-case measure in the binary search.
        
    While obtaining such a set of linear constraints is rather straightforward for most of the discussed worst-case measures, we illustrate how to obtain a matching with a bounded \textit{unpopularity margin}. Denote by~$\phi(M,M')$ the number of agents who prefer matching~$M$ over matching~$M'$. We say that matching~$M$ is \textit{more popular} than matching~$M'$ if $\phi(M,M')>\phi(M',M)$. The unpopularity margin of a matching~$M$ is then defined as $g(M)=\max_{M'\in\mathcal{M}}\left(\phi(M',M)-\phi(M,M')\right)$. The interpretation of a matching~$M$ having unpopularity margin~$g(M)=\omega$ is that for all matchings~$M'$, the number of agents who prefer~$M'$ over~$M$ is at most~$\omega$ units larger than the number of agents who prefer~$M$ over~$M'$.
    
    To impose the matchings in the decomposition to have an unpopularity margin of at most~$\omega$, we adopt a similar approach as in the recent work by \cite{kavitha2022popular}. Consider the additional binary variable~$m_{i\varnothing}$ which equals one if agent $i\in N$ is assigned to the outside option, and zero otherwise, and let $q_\varnothing = |N|$. Moreover, given a matching~$M\in\mathcal{M}$, let weights $\nu_M(i,j)$ be equal to one if agent~$i$ prefers object~$j$ to $M(i)$, and let $\nu_M(i,j)$ equal minus one if agent~$i$ prefers $M(i)$ to~$j$. Following \cite{kavitha2022popular}, the optimal objective function of formulation $[\text{LP}_1]$, in which $m'_{ij}$ are the decision variables, will be equal to the unpopularity margin of a given matching~$M$.
    \begin{subequations}
	\begin{align}
	&\left[\text{LP}_1\right] & &\max &\sum_{(i,j) \in N \times (O\cup\varnothing)} \nu_M(i,j) \cdot m'_{ij} &&\\ 
	& & &\: \text{s.t.} &\sum_{j\in O\cup\varnothing} m'_{ij} &= 1   &\forall \; i \in N ,\label{con:LP1_1}\\
	& & & &\sum_{i \in N} m'_{ij} &\leq q_j   &\forall \; j \in O\cup\varnothing ,\label{con:LP1_2}\\
	 & & & & m'_{ij} &\geq 0 &\forall \; (i,j) \in N \times (O\cup\varnothing).\label{con:LP1_3}
	\end{align}
\end{subequations}

However, when matching~$M$ is no longer given, but determined upon by the pricing problem, weights $\nu_M$ become decision variables as well. Simply adding a constraint to the pricing problem that upper bounds the objective function of $[\text{LP}_1]$ by~$\omega$ would therefore result in a non-linear integer program. To overcome this issue, we can use strong duality and include the dual constraints of $[\text{LP}_1]$ while lower bounding the dual objective by~$\omega$. As a result, the unpopularity margin of the found matching will be at most~$\omega$ if and only if there exists a vector~$\alpha\in\mathbb{R}^{|N|+|O| + 1}$ and a matching $M=[m_{ij}]\in\mathcal{M}$ that satisfy constraints (\ref{eq:unpop1})-(\ref{eq:unpop4}), where $\alpha$ represents the dual variables of constraints (\ref{con:LP1_1}) and (\ref{con:LP1_2}). Constraint~(\ref{eq:unpop2}) corresponds to the bounded dual objective of $[\text{LP}_1]$, and constraints (\ref{eq:unpop1}) and (\ref{eq:unpop4}) to the dual constraints. Constraints~(\ref{eq:unpop3}) determine the values of weights~$\nu_M$ as defined.
\begin{subequations}
	\begin{align}
	       \sum_{i\in N} \alpha_i + \sum_{j \in O\cup\varnothing}q_j\alpha_j &\geq \omega, & \label{eq:unpop2}\\
	       \alpha_i + \alpha_j &\geq \nu_M(i,j) &\forall \; (i,j) \in N \times \left(O \cup \varnothing\right) \label{eq:unpop1},\\
	       \sum_{k\in O\cup\varnothing:j >_i k}m_{ik}- \sum_{k\in O\cup \varnothing:k>_i j}m_{ik} &= \nu_M(i,j) &\forall \;(i,j) \in N \times (O\cup \varnothing),\label{eq:unpop3}\\
	       \alpha_j&\geq 0 &\forall j \in O\cup\varnothing.\label{eq:unpop4}
	\end{align}
\end{subequations}

\section{Computational experiments}
\label{sec:results}
In this section, we discuss the computational performance of the introduced methods for decomposing $X^{\text{RSD}}$, which we test both on existing and on newly generated instances. The choice to apply our methods on the RSD mechanism is motivated by its widespread use in practical applications. In particular, these experiments allow us to gain the two following main insights. Firstly, despite the worst-case result of Theorem~\ref{theorem:LB-RSD_2}, the optimal value of MD-SD$(X^{\text{RSD}})$ is equal to its upper bound $\lfloor \mu(X^{\text{RSD}}) \rfloor$ for all generated instances that were solved to optimality. Secondly, we observe that even when the column generation frameworks from Section~\ref{sec:column_gen} do not obtain a guaranteed optimal solution within the imposed runtime limit, their results can still be highly valuable in real-world applications, and we discuss two real-world school choice cases from the Belgian cities of Antwerp and Ghent as examples.

\subsection{Implementation details}
\label{subsec:impl_details}
All experiments were implemented with C++, compiled with Microsoft Visual C++ 2019, and run on an Intel Core i7-7700 processor running at 3.60 GHz, with 16GB of RAM memory on a Windows 10 64-bit OS. All linear and integer programs are solved using IBM ILOG CPLEX 12.9, implemented in C++ with Concert Technology, with default parameter settings, and with a precision of~$10^{-4}$ to avoid numerical issues.

In order to decompose $X^{\text{RSD}}$, this matrix first needs to be computed. Although the RSD algorithm can be easily executed by randomly selecting an ordering $\sigma \in \Sigma'$, \cite{aziz2013computational} showed that computing $X^{\text{RSD}}$ is $\#\mathcal{P}$-complete, and thus intractable. Moreover, \cite{saban2015complexity} found that $X^{\text{RSD}}$ cannot even be approximated efficiently. We therefore estimate $X^{\text{RSD}}$ in this section by computing Equation~(\ref{eq:RSD}) over a random sample $\hat{\Sigma}' \subseteq \Sigma'$ of the orderings, and we opt for sample size $|\hat{\Sigma}'| = 10,000$. Because $\hat{\Sigma}'$ is a random sample, both the resulting estimate of $X^{\text{RSD}}$ and the solution of the column generation framework are unbiased.

Moreover, we determine $p^-(I)$, i.e., the cardinality of the minimum-cardinality ex-post efficient matching in a one-sided matching instance $I \in \mathcal{I}$, by formulating an IP model with constraints $[\text{PP}_0]$ and $[\text{PP}_\text{PE}]$ from the pricing problem while minimizing the number of assigned agents. Note that despite the $\mathcal{NP}$-hardness of this problem \citep{abraham2004pareto}, the running times are acceptable for the considered instances.

In all experiments, we include an initial subset of matchings as variables in the initial restricted master problem~$\left[\text{RMP}\right]$ or~$\left[\alpha\text{RMP}\right]$. Here, a trade-off emerges: whereas a larger sample size is likely to decrease the number of times the pricing problem will be called, it also tends to increase the solution time of the restricted master problem. To balance these two effects, we adopt a sample size of $10,000$ random orderings. For column generation framework $\left[\text{RMP}\right]$, described in Sections~\ref{subsec:RMP} and~\ref{subsec:pricing_problem}, we only retain the matchings that assign at least~$k$ agents to an object, where $k\in\mathbb{N}$ is the value to be checked in the binary search. When solving $\left[\text{RMP}\right]$ for $\lfloor\mu(X^{\text{RSD}})\rfloor$, for example, generally slightly more than half of the sampled matchings are retained. Moreover, we start the binary search for both column generation frameworks by setting $k=\lfloor\mu(X^\text{RSD})\rfloor$.

\subsection{Data generation}
\label{subsec:data_gen}
We have developed a parameterized data generation tool that is based on the properties of two real-world data sets from the school choice problem in the Belgian cities Antwerp and Ghent (see Section~\ref{subsec:Antw_Ghent} for more details). Next to basic parameters, such as the objects' capacities or the length of the preference lists, we also consider the popularity of the objects in our data generation tool. More specifically, the user can specify the desired correlation between an object's capacity and its popularity ($\rho$, see Table~\ref{table:datagen_parameters}), to what extent agents who submit a longer preference list include more popular objects on average ($\Delta_1$), and to what extent the popularity of the objects in an agent's preference list is influenced by the popularity of their first choice ($\Delta_2$). Table~\ref{table:datagen_parameters} contains all parameters of the data generation and their default values, which are based on the data of Ghent, and which we used to generate the instances used in the computational experiments. The data generation proceeds as follows.

Denote by $l_i\in\mathbb{N}$ the preference list length of agent~$i \in N$, i.e., the number of objects that are preferred to the outside option by agent~$i$, and denote $l = (l_1, \ldots, l_{|N|})$. Recall that we denote the capacities of the objects by $q = (q_1, \ldots, q_{|O|})$. Moreover, define the \textit{popularity}~$\eta_j\in\mathbb{R}^+$ of an object~$j \in O$ as the ratio of the number of times $j$ is preferred to the outside option over the capacity of that object, and denote $\eta = (\eta_1, \ldots, \eta_{|O|})$. The fraction $\xi\in\left[0,1\right]$ of the objects with the highest popularity are called \textit{popular}. The data generation itself consists of the following four steps.
\begin{enumerate}[label=(\roman*)]
	\itemsep 0em
    \item Generate the preference list lengths~$l_i$ from a truncated normal distribution $\mathcal{N}(\overline{l}, s^2)$ such that $l_i\geq 1$, and round to the closest integer, for all agents~$i \in N$.
    \item Generate $\eta$ and $q$ from truncated standard normal distributions, and ensure that their correlation equals~$\rho$ by multiplying them with the upper triangle matrix in the Cholesky decomposition of the covariance matrix corresponding to $\rho$. 
    \item Rescale $\eta$ and $q$ to satisfy the desired means and standard deviations, and round the elements of~$\eta$ and~$q$ to the closest integers.
    \item Fill in the preference profile~$>$. First, determine for each agent~$i \in N$ and for each position~$t\leq l_i$ in~$>_i$ the probability~$\pi_{it}$ with which a popular object is selected, in function of $\Delta_1$ and $\Delta_2$ (as defined in Table~\ref{table:datagen_parameters}). Second, each object~$j\in O$ in the group of popular, resp.\ unpopular, objects is selected with probability $\pi_{it}\frac{\eta_jq_j}{Q}$, resp.\ $(1-\pi_{it})\frac{\eta_jq_j}{Q}$, where $Q$ is the sum of $\eta_kq_k$ for the (un)popular objects $k\in  O$ that are not yet in $>_i$. If $Q=0$ for the popular (unpopular) objects for some values of $i$ and $t$, then~$\pi_{it}$ is set to zero (one).
\end{enumerate}
\begin{table}[hbt!]
	\small
	\centering
	\caption{Data generation parameters and their default values.}
	\begin{tabular}{c l c}
		\toprule
		\textbf{Parameter} & \textbf{Description} & \textbf{Default value}\\
		\midrule
		$C$ & $\sum_{j\in O} \sfrac{q_j}{|N|}$ & 1.20 \\
		$\overline{l}$ & Mean preference list length~$l$&2.42\\
		$s$ & Standard deviation of preference list length~$l$&1.05\\
		$\xi$ & Fraction of the objects that are popular & 0.10 \\
		$\rho$ & Correlation between $q$ and $\eta$ & 0.21\\
		$CV_c$ &  The coefficient of variation of the capacities~$q$ &0.80\\
		$CV_\eta$ &  The coefficient of variation of the popularity~$\eta$ &0.60\\
		$\Delta_1$ & \multicolumn{1}{p{7.2cm}}{Difference in average popularity of the objects in a preference list of length $\frac{1}{2}(\max_{i \in N}l_i + 1)$, compared to in a preference list of length one}&0.14\\
		$\Delta_2$ &\multicolumn{1}{p{7.2cm}}{Difference in the probability of submitting a popular object between when the first choice is popular and unpopular}&0.01\\
		\bottomrule
	\end{tabular}
	\label{table:datagen_parameters}
\end{table}

A detailed description of the data generation, its repositories and the generated instances are available online (\url{https://github.com/DemeulemeesterT/GOSMI.git}).

\subsection{Observations for generated data}
\label{subsec:observ}

\pgfplotstableset{
	col sep=tab,
	CPU/.style={
		string replace={>1h}{}, 
		string replace={NA}{}, 
		fixed,fixed zerofill,
		column name = {CPU},
		precision=2,
		column type = r,
	},
	NRSUCCES/.style={
		string replace={NA}{},	
		empty cells with={},
		fixed,
		column name = {$\checkmark$},
		column type = r,
	},	
	NROPTISUB_RMP/.style={
		string replace={NA}{},	
		empty cells with={},
		fixed,
		sci,
		precision=1,
		column name = {\multicolumn{1}{c}{$\left[\text{RMP}\right]$}},
		column type = c,
	},	
	NROPTISUB_ALPHA_RMP/.style={
		string replace={NA}{},	
		empty cells with={},
		fixed,
		precision=5,
		column name = {\multicolumn{1}{c}{$\left[\alpha\text{RMP}\right]$}},
		column type = c,
	},
	AGENTCOUNT/.style={
		column name={$|N|$},
		column type = r,
	},
}

\pgfplotstableread[col sep=comma]{ResultsSummary.csv}\Summarytable
	
Table~\ref{table:summary} summarizes the main findings of our computational experiments for generated data. We can conclude that both column generation frameworks that were introduced in Section~\ref{sec:column_gen} manage to solve relatively large instances to optimality in acceptable runtimes. Column generation framework $\left[\text{RMP}\right]$ slightly outperforms $\left[\alpha\text{RMP}\right]$, because it requires less time, on average, and manages to solve more instances to optimality. In general, we observe that both an increase in the number of agents, and an increase in the number of objects per agent cause the runtimes to rise. 

Furthermore, for all instances that were solved to optimality, the difference between $z(X^{\text{RSD}})$ and~$p^-(I)$, the worst-case number of assigned agents by the RSD algorithm, is substantial, with an average increase of 5.42\% of the total number of agents. In fact, this is the maximal possible increase, as we observe that for all optimally solved instances, $z(X^{\text{RSD}})$ is equal to its upper bound $\lfloor \mu(X^{\text{RSD}})\rfloor$, despite the worst-case result of Theorem~\ref{theorem:LB-RSD_2}. This observation can be explained in two different ways. First, \cite{che2010asymptotic} showed that $X^{\text{PS}}$ and $X^{\text{RSD}}$ become equivalent when the market becomes large, and we know from Theorem~\ref{theor:MD-P} that $z(X^{\text{PS}}) = \lfloor \mu(X^{\text{PS}})\rfloor$. Second, an instance similar to the instance~$I_k$ that was given in Theorem~\ref{theorem:LB-RSD_2} to prove the lower bound on $z(X^{\text{RSD}})$ is unlikely to occur in practice, because it requires a fraction $\frac{k^2-k}{k^2}$ of the agents to all submit the same object as their unique preference, and this fraction goes to one for large values of~$k$. 

\begin{table}[tb!]
		\small
		\setlength\tabcolsep{3.7pt}
		\captionsetup{font=footnotesize}
		\caption{Computational results of column generation frameworks $\left[\text{RMP}\right]$ and $\left[\alpha\text{RMP}\right]$ for MD-SD$(X^{\text{RSD}})$. The column CPU displays the average computation time (in s) when optimality was obtained within one hour, and the column~$\checkmark$ contains the number of instances solved to optimality, out of 25 instances.}
		\centering
		\pgfplotstabletypeset[
		columns={N,Time_Orig_1,OPTcount_Orig_1,Time_Alt_1, OPTcount_Alt_1,Time_Orig_10,OPTcount_Orig_10,Time_Alt_10, OPTcount_Alt_10,Time_Orig_50,OPTcount_Orig_50,Time_Alt_50, OPTcount_Alt_50},
		columns/N/.style={AGENTCOUNT},
		columns/Time_Orig_1/.style={CPU},
		columns/Time_Orig_10/.style={CPU},
		columns/Time_Orig_50/.style={CPU},
		columns/Time_Alt_1/.style={CPU},
		columns/Time_Alt_10/.style={CPU},
		columns/Time_Alt_50/.style={CPU},
		columns/OPTcount_Orig_1/.style={NRSUCCES},
		columns/OPTcount_Orig_10/.style={NRSUCCES},
		columns/OPTcount_Orig_50/.style={NRSUCCES},
		columns/OPTcount_Alt_1/.style={NRSUCCES},
		columns/OPTcount_Alt_10/.style={NRSUCCES},
		columns/OPTcount_Alt_50/.style={NRSUCCES},
		every head row/.style={
			before row={%
				\toprule
				& \multicolumn{4}{c}{$\sfrac{|N|}{|O|} = 1$} & \multicolumn{4}{c}{$\sfrac{|N|}{|O|} = 10$} & \multicolumn{4}{c}{$\sfrac{|N|}{|O|} = 50$}
				\\ \cmidrule(rl){2-5} \cmidrule(rl){6-9} \cmidrule(rl){10-13} & \multicolumn{2}{c}{$\;\;\left[\text{RMP}\right]$} & \multicolumn{2}{c}{$\;\left[\alpha\text{RMP}\right]$} & \multicolumn{2}{c}{$\;\;\left[\text{RMP}\right]$} & \multicolumn{2}{c}{$\;\left[\alpha\text{RMP}\right]$} & \multicolumn{2}{c}{$\;\;\left[\text{RMP}\right]$} & \multicolumn{2}{c}{$\;\left[\alpha\text{RMP}\right]$}\\
				\cmidrule(rl){2-3}\cmidrule(rl){4-5}\cmidrule(rl){6-7}\cmidrule(rl){8-9}\cmidrule(rl){10-11}\cmidrule(rl){12-13}
			},
			after row=\midrule,
		},
		every last row/.style={
			after row=\bottomrule,
		},
		every row 5 column 1/.style={postproc cell content/.style={@cell content=\ensuremath{-}}
		},
		every row 5 column 3/.style={postproc cell content/.style={@cell content=\ensuremath{-}}
		},
		every row 5 column 5/.style={postproc cell content/.style={@cell content=\ensuremath{-}}
		},
		every row 5 column 7/.style={postproc cell content/.style={@cell content=\ensuremath{-}}
		},
		]\Summarytable
		\label{table:summary}
	\end{table}

\begin{table}[bt!]
		\small
		\captionsetup{font=footnotesize}
		\caption{Average objective values for the MD-SD$(X^{\text{RSD}})$ instances that could not be solved to optimality within one hour, for column generation frameworks $\left[\text{RMP}\right]$ and $\left[\alpha\text{RMP}\right]$.}
		\centering
		\pgfplotstabletypeset[
		columns={N,Obj_val_unsolved_Orig_1,Obj_val_unsolved_Alt_1,Obj_val_unsolved_Orig_10,Obj_val_unsolved_Alt_10,Obj_val_unsolved_Orig_50,Obj_val_unsolved_Alt_50},
		columns/N/.style={AGENTCOUNT},
		columns/Obj_val_unsolved_Orig_1/.style={NROPTISUB_RMP},
		columns/Obj_val_unsolved_Orig_10/.style={NROPTISUB_RMP},
		columns/Obj_val_unsolved_Orig_50/.style={NROPTISUB_RMP},
		columns/Obj_val_unsolved_Alt_1/.style={NROPTISUB_ALPHA_RMP},
		columns/Obj_val_unsolved_Alt_10/.style={NROPTISUB_ALPHA_RMP},
		columns/Obj_val_unsolved_Alt_50/.style={NROPTISUB_ALPHA_RMP},
		every head row/.style={
			before row={%
				\toprule
				& \multicolumn{2}{c}{$\sfrac{|N|}{|O|} = 1$} & \multicolumn{2}{c}{$\sfrac{|N|}{|O|} = 10$} & \multicolumn{2}{c}{$\sfrac{|N|}{|O|} = 50$}
				\\ \cmidrule(rl){2-3} \cmidrule(rl){4-5} \cmidrule(rl){6-7}
			},
			after row=\midrule,
		},
		skip rows between index={0}{3},
		every last row/.style={
			after row=\bottomrule,
		},
		every row 3 column 1/.style={postproc cell content/.style={@cell content=\ensuremath{-}}
		},
		every row 3 column 3/.style={postproc cell content/.style={@cell content=\ensuremath{-}}
		},
		every row 3 column 5/.style={postproc cell content/.style={@cell content=\ensuremath{-}}
		},
		every row 4 column 5/.style={postproc cell content/.style={@cell content=\ensuremath{-}}
		},
		]\Summarytable
		\label{table:obj_val}
	\end{table}

Table~\ref{table:obj_val} shows that for the instances that could not be solved to optimality within the runtime limit of one hour, the objective values of $\left[\text{RMP}\right]$ and $\left[\alpha\text{RMP}\right]$ are close to their best possible values. Recall that the objective value~$s^*$ of $\left[\text{RMP}\right]$ represents the maximum difference for any agent-object pair between the generated probabilistic assignment by $\left[\text{RMP}\right]$ and $X^{\text{RSD}}$, while the objective value~$\alpha^*$ of $\left[\alpha\text{RMP}\right]$ represents the cumulative weights of the matchings that assign at least $\lfloor\mu(X^{\text{RSD}})\rfloor$ agents to an object in the decomposition of $X^{\text{RSD}}$. For $\left[\text{RMP}\right]$, the worst value of~$s^*$ over all unsolved instances is $2.20\cdot10^{-3}$, for example, whereas the largest value of~$s^*$ over all unsolved instances with $1,000$ agents is only $3.4\cdot10^{-4}$. Similarly, for $\left[\alpha\text{RMP}\right]$, the worst value of $\alpha^*$ over all unsolved instances is 0.92543, whereas the smallest value of~$\alpha^*$ over all unsolved instances with $1,000$ agents is only 0.97272. We argue that, even when optimality cannot be obtained, the resulting decompositions are still valuable for real-world applications. Both methods will return decompositions in which all matchings are ex-post efficient, but the choice of the method depends on the application. If assigning at least $\lfloor\mu(X^{\text{RSD}})\rfloor$ agents to an object is more important than decomposing~$X^{\text{RSD}}$ exactly, then the solution to $\left[\text{RMP}\right]$ provides a decomposition in which the implemented probability for each agent-object pair is at most~$s^*$ higher than in $X^{\text{RSD}}$, because of constraints~(\ref{con:rmp2_1}). Alternatively, if decomposing $X^{\text{RSD}}$ exactly is of higher importance, then $\left[\alpha\text{RMP}\right]$ obtains a decomposition in which a fraction~$\alpha^*$ of the matchings assign at least $\lfloor\mu(X^{\text{RSD}})\rfloor$ agents to an object.

Lastly, we have also evaluated the algorithm we introduced in the proof of Theorem~\ref{theorem:impl_variab} to decompose $X^{\text{RSD}}$ such that all matchings in the decomposition assign at least $z(X^{\text{RSD}})$ agents to an object. For the instances of Table~\ref{table:summary}, the algorithm only resulted in an ex-post efficient decomposition for some instances with a small number of objects, namely for 20 among the 75  instances where $(|N|, |O|) \in \{(10, 10), (50, 5), (100, 2)\}$. The algorithm from Theorem~\ref{theorem:impl_variab} is therefore not suited to find an optimal decomposition for MD-SD$(X)$ when $X$ is not robust ex-post efficient.

\subsection{Antwerp and Ghent}
\label{subsec:Antw_Ghent}
Next to evaluating our methods on generated data, we also consider two real-world school choice data sets from the cities of Antwerp and Ghent. The data from Antwerp corresponds to the primary school enrollment in the scholastic year of 2014-2015, and contains 4,236 students and 186 schools. The data of Ghent corresponds to the secondary school enrollment in 2018-2019, and contains 3,081 students and 64 schools. For confidentiality reasons, the data sets cannot be made publicly available.

In both data sets, the possible increase that we can realize in the worst-case number of assigned agents is substantial, compared to the RSD algorithm. For Antwerp, this increase is equal to 295 students, which is an increase by 6.96\% of the total number of students, while the increase in Ghent equals 162 students, or 5.26\%.

When applying column generation framework $\left[\text{RMP}\right]$ of Sections~\ref{subsec:RMP} and~\ref{subsec:pricing_problem} to both instances, we notice that a larger initial number of included matchings leads to better overall solutions. For Ghent, for example, the best found objective value for $\left[\text{RMP}\right]$ is equal to 0.00161 after twenty-four hours of computing when starting from an initial sample size of 10,000 matchings, among which the matchings that assigned more than $\lfloor\mu(X^{\text{RSD}})\rfloor$ students were retained. When starting from 50,000 matchings, however, an optimal decomposition is found after almost five hours. Similarly, for Antwerp the objective value decreases from 0.00579 after twenty-four hours when sampling 10,000 initial matchings to $8.7\cdot10^{-4}$ after only two hours for an initial sample of 50,000 matchings. 

We observe similar behaviour for the alternative column generation framework $\left[\alpha\text{RMP}\right]$. For Ghent, when initially including $10,000$ matchings, the best found objective value is 0.85873 after twenty-four hours, while an objective value of 0.99900 is found after two hours, in the first iteration, when including $50,000$ matchings. Performing the same test for Antwerp causes the objective value to rise from 0.64919 to 0.93339.

\section{Conclusion}
\label{sec:conclusion}
In this paper, we have studied the problem of decomposing a probabilistic assignment for one-sided matching over ex-post efficient matchings while maximizing the worst-case number of assigned agents. With respect to the two most studied mechanisms for one-sided matching, PS and RSD, we have obtained the following insights. 

We have established that it is always possible to decompose the probabilistic assignment by the PS mechanism in polynomial time over ex-post efficient matchings that all assign the expected number of assigned agents by PS, either rounded up or down. The same result does not hold for the RSD mechanism, however. On the one hand, it is possible for the RSD algorithm to assign only half of the optimal worst-case number of agents to an object. On the other hand, instances exist for which we cannot realize any improvement in the worst-case number of assigned agents when decomposing the probabilistic assignment by RSD over ex-post efficient matchings. 

In our computational experiments, we have found that for all generated instances that were solved to optimality, the optimal worst-case number of assigned agents when decomposing the probabilistic assignment by RSD over ex-post efficient matchings is equal to the expected number of assigned agents by RSD, rounded down. This promising result encourages the adoption of our solution methods in practical applications. By applying the column generation frameworks that we have introduced to real-world school choice data sets, we have illustrated that even if optimality cannot be obtained in the desired runtime, the found solution can still be highly valuable in practice.

\setlength{\parskip}{10pt}
\footnotesize \textbf{Acknowledgements} Tom Demeulemeester is funded by PhD fellowship 11J8721N of Research Foundation - Flanders. Ben Hermans is funded by post-doc fellowship 12ZZI21N of Research Foundation - Flanders. We are grateful to Steven Penneman from the city of Antwerp, and Pieter De Wilde from the city of Ghent for their collaboration and for making their data sets available to us. Moreover, we would like to thank Markus Brill, Ágnes Cseh, and the members of the ALGO research group at TU Berlin for their feedback. Lastly, we would like to thank the anonymous reviewers for their valuable comments and suggestions.
\setlength{\parskip}{0pt}

\normalsize
\bibliography{mybibfile}

\appendix
\section{Modified algorithm of \texorpdfstring{\cite{budish2013}}{Budish et al.\ (2013)}}
\label{app:budish_modified}
	This appendix describes a modified version of the algorithm of \cite{budish2013} that, given a probabilistic assignment~$X \in \Delta \mathcal{M}$ with~$\mu(X) \in \mathbb{N}$, constructs a matching~$M \in \mathcal{M}$ that satisfies Properties~\ref{prop_cardinality}-\ref{prop_integrality} as stated in the proof of Theorem~\ref{theorem:impl_variab}. Our approach simplifies the one of \cite{budish2013} as it considers only the specific constraint structure~$\mathcal{H}$ with quotas~$(q_S)_{S \in \mathcal{H}}$ corresponding to~$\Delta \mathcal{M}$, as introduced in the proof of Theorem~\ref{theorem:impl_variab}, rather than the more general bihierarchical structure considered by \cite{budish2013}. This restriction enables us to reduce the $\mathcal{O}(\vert\mathcal{H}\vert^2)$ running time of the algorithm by \cite{budish2013} to  $\mathcal{O}(\vert\mathcal{H}\vert\min \{\vert N \vert, \vert O \vert \})$.
	
	Given an assignment~$X \in \Delta \mathcal{M}$, consider a bipartite graph~$B(X) = (V, E(X))$ with vertices $V = N \cup O \cup \{s,t\}$ and a set of edges~$E(X)$ consisting of
	\begin{enumerate}[label = (\roman*)]
		\itemsep 0em
		\item an edge~$\{i,j\}$ for every~$(i,j) \in N \times O$ for which $x_{ij}$ is fractional,
		\item an edge~$\{s, i\}$ for every~$i \in N$ for which $\sum_{j \in O} x_{ij}$ is fractional,
		\item and an edge~$\{j, t\}$ for every~$j \in O$ for which $\sum_{i \in N} x_{ij}$ is fractional.
	\end{enumerate}

    Every iteration~$u$ of the algorithm starts from a given probabilistic assignment~$X^u$  and a graph~$B(X^u)$, where~$X^u = X$ if $u = 1$, and $X^u$ is defined in the previous iteration otherwise. If~$X^u$ is a matching, i.e., if $X^u \in \mathcal{M}$, then the procedure outputs~$M = X^u$ and terminates. Otherwise, we determine a cycle $C = \langle i_1, j_1, i_2, j_2, \ldots, i_{r}, j_{r}\rangle$ in~$B(X^u)$ with $\{i_k, j_k\}, \{j_k, i_{k + 1}\} \in E(X^u)$ for every $k = 1, \ldots, r$ and $i_{r + 1} = i_{1}$, where the indices are such that $i_k \in N \cup \{t\}$ and~$j_k \in O \cup \{s\}$ for every~$k = 1,\ldots, r$. Next, we let~$\alpha$ be the largest real number such that the probabilistic assignment~$X^{u + 1}$ with
    \begin{equation*}
    	x^{u + 1}_{ij} = 
    	\begin{cases} 
    		x^u_{ij} + \alpha & \text{if $(i,j) = (i_k, j_k)$ for some $k = 1, \ldots, r$,} \\
    		x^u_{ij} - \alpha & \text{if $(i,j) = (i_{k + 1}, j_{k})$ for some $k = 1, \ldots, r$,} \\
    		x^{u}_{ij} & \text{otherwise,}
    	\end{cases}
    \end{equation*}
    is feasible, update the graph~$B(X^{u + 1})$, and proceed to the next iteration.

    To see that the algorithm yields a matching with the desired properties in time $\mathcal{O}(\vert \mathcal{H} \vert \linebreak[1] \min \{\vert N \vert, \vert O \vert\})$, observe first that in every iteration~$u$ the constructed assignment $X^u \in \Delta \mathcal{M}$ satisfies Properties~\ref{prop_cardinality}-\ref{prop_integrality} as stated in the proof of Theorem~\ref{theorem:impl_variab}. Moreover, it follows from the construction of~$B(X^u)$ that the degree of each vertex in~$B(X^u)$ is either zero or at least two and, since~$B(X^u)$ is bipartite, that each cycle contains at most~$2 \min\{\vert N \cup \{t\}\vert, \vert O \cup \{s\}\vert\}$ edges. Hence, if there is at least one~$(i,j) \in N \times O$ for which~$x^{u}_{ij}$ is fractional, then we can obtain a cycle of the desired form, compute~$\alpha$, and determine~$X^{u + 1}$ and~$B(X^{u+1})$ in time $\mathcal{O}(\min \{\vert N \vert, \vert O \vert\})$. Finally, since~$\alpha$ is chosen such that~$\tau(X^{u + 1}) > \tau(X^u)$, where~$\tau$ is defined as in the proof of Theorem~\ref{theorem:impl_variab}, the algorithm terminates after at most~$|\mathcal{H}|$ iterations.
    
    \section{Proof of Theorem~\ref{theorem:LB-RSD_2}}\label{appendix:proofThLB}
    Consider a family of one-sided matching instances~$I_k = (N, O, >, q)$, with $k \geq 2$ an integer. Let $N = \{1, 2, \ldots, k^2\}$ be a set of $k^2$ agents, and let $O = \{o_1, o_2, \ldots, o_{k+1}\}$ be a set of $k+1$ objects. Let the capacities of the objects in $O$ be equal to $q = (k, 1, 1, \ldots, 1)$. Moreover, let the preference list $>_i$ of each agent $i \in N$ be equal to
	\begin{equation*}
	>_i \; = 
	\begin{cases}
	o_1 > o_2 > \ldots > o_{k+1} > \varnothing &\text{if } i \leq k,\\
	o_1 > \varnothing > o_2 > \ldots > o_{k+1} &\text{if } i > k.
	\end{cases}
	\end{equation*}
	
	We now show that $\mu(X^{\text{RSD}(I_k)}) =  2k-1$ and $z(X^{\text{RSD}(I_k)}) = k$ for every~$k \geq 2$. First, consider an arbitrary strict ordering~$\sigma \in \Sigma'$ of the agents that is used by the SD mechanism to construct an ex-post efficient matching $\text{SD}(I_k, \sigma) \in \mathcal{M}^{\text{SD}}$. The preference lists are such that each agent~$i \leq k$ always receives an object and such that an agent~$i > k$ only receives an object if it is among the first~$k$ agents in the ordering~$\sigma$. 	Hence, if we denote by~$\tau_i$ the probability that agent~$i$ receives an object by applying the SD mechanism, then we obtain that~$\tau_i = 1$ if $i \leq k$, and, since there are~$k^2$ agents, that~$\tau_i = k / k^2 = 1/k$ if $i > k$. It follows that
	\begin{equation*}
		\mu(X^{\text{RSD}(I_k)}) = \sum_{i = 1}^{k^2} \tau_i = \sum_{i=1}^k \tau_i + \sum_{i=k+1}^{k^2} \tau_i = k + \frac{k^2 - k}{k} = 2k -1.
	\end{equation*}

	Second, to prove that $z(X^{\text{RSD}(I_k)}) = k$, recall that the set of all ex-post efficient matchings that assign at least $r \in \mathbb{N}$ agents to an object is denoted by $\mathcal{M}^{\text{SD}}_r \subseteq \mathcal{M}^{\text{SD}}$. In instance~$I_k$, denote the unique ex-post efficient matching that assigns $k$ agents to an object by $M^0 \in \mathcal{M}^{\text{SD}}$, and note that all other matchings in $\mathcal{M}^{\text{SD}}$ assign strictly more than $k$ agents to an object. Moreover, note that matching SD$(I_k, \sigma)$ will only be equal to $M^0$ if all agents $i\leq k$ are ranked before all $k^2-k$ other agents $i>k$ in ordering~$\sigma$. Denote the probability that the RSD algorithm will obtain matching~$M^0$ in instance~$I_k$ by $\theta_k = \binom{k^2}{k}^{-1} \!> 0$. Denoting the weight of matching $M^t\in \mathcal{M}^{\text{SD}}$ by $\lambda^t\geq 0$, every feasible decomposition of $X^{\text{RSD}(I_k)}$ should satisfy
	\begin{equation}
		\label{eq:Claim2_proof}
		\sum_{t:M^t \in \mathcal{M}^{\text{SD}}_{k+1}} \lambda^t = \sum_{i = 1}^k x^{\text{RSD}(I_k)}_{i2} = 1 - \theta_k < 1,
	\end{equation}
	because we know that every matching in $\mathcal{M}^{\text{SD}}_{k+1}$ assigns exactly one agent $i \leq k$ to object~$o_2$. As $\mathcal{M}^{\text{SD}} = \mathcal{M}_{k+1}^{\text{SD}} \cup M^0$ and $M^0 \notin \mathcal{M}_{k+1}^{\text{SD}}$, Equation~(\ref{eq:Claim2_proof}) implies that $\lambda^0 > 0$. This, in turn, implies that $z(X^{\text{RSD}(I_k)}) = k$, because every feasible decomposition of $X^{\text{RSD}(I_k)}$ over ex-post efficient matchings has a strictly positive weight for the matching~$M^0$ that assigns exactly $k$ agents to an object.
		
	Combining equalities $\mu(X^{\text{RSD}(I_k)}) =  2k-1$ and $z(X^{\text{RSD}(I_k)}) = k$, we obtain that
	\begin{equation*}
	\lim_{k \to \infty}\frac{z(X^{\text{RSD}(I_k)})}{\lfloor \mu(X^{\text{RSD}(I_k)}) \rfloor}  = \lim_{k \to \infty}\frac{k}{2k - 1} = \frac{1}{2},
	\end{equation*}	
	which proves the theorem.
	
     \section{Proof of Theorem~\ref{theorem:UB-RSD_2}}\label{appendix:proofThUB}
    Consider a family of one-sided matching instances~$I_\ell = (N,O,>, q)$, with $\ell \geq 2$ an integer. Let $N = \{1, 2, \ldots, \ell^2\}$ be a set of $\ell^2$ agents, and let $O = \{o_1, o_2\}$ be a set of two objects. Let the capacities of the objects in $O$ be equal to $q = (\ell, \ell)$. Moreover, let the preference list $>_i$ of each agent $i \in N$ be equal to
	\begin{equation*}
	>_i \; = 
	\begin{cases}
	o_1 > o_2 > \varnothing &\text{if } i \leq \ell,\\
	o_1 > \varnothing > o_2 &\text{if } i > \ell.
	\end{cases}
	\end{equation*}
	
	First, the minimum-cardinality ex-post efficient matching in this instance will be equal to SD$(I_\ell, \sigma)$ for every strict ordering $\sigma \in \Sigma'$ that ranks all agents $i \leq \ell$ before agents $i > \ell$. Hence, $p^-(I_\ell) = \ell$. 
	
	Second, we show that we can decompose $X^{\text{RSD}(I_\ell)}$ over ex-post efficient matchings that all assign $2\ell-1$ agents to an object, i.e., $z(X^{\text{RSD}(I_\ell)}) = 2\ell - 1$. To construct a decomposition of $X^{\text{RSD}(I_\ell)}$ over $\mathcal{M}^{\text{SD}}_{2\ell-1}$, consider the following set of $\ell$ matchings $\{M^t\}_{t=1}^\ell$, where
	    \begin{equation*}
		M^t(i) = 
		\begin{cases}
		o_1 &\text{if } i = t,\\
		o_2 &\text{if } i \leq \ell \text{ and } i \neq t,\\
		o_1 &\text{if } i \in \cup_{\gamma = 2}^{\ell}	\{(\ell-1)t + \gamma\},\\
		\varnothing &\text{otherwise}.		
		\end{cases}
		\end{equation*}
	Clearly, all matchings in $\{M^t\}_{t=1}^\ell$ are ex-post efficient and assign exactly $2\ell-1$ agents to an object. Note that each agent is assigned to $o_1$ by exactly one matching in $\{M^t\}_{t=1}^\ell$, and that each agent $i\leq \ell$ will always be assigned to an object in these matchings. Moreover, note that 
		\begin{equation*}
		x^{\text{RSD}(I_\ell)}_{ij} = 
		\begin{cases}
		\sfrac{1}{\ell} &\text{if } j = 1, \\
		1-\sfrac{1}{\ell} &\text{if } j = 2 \text{ and } i \leq \ell, \\
		0 &\text{if } j = 2 \text{ and } i > \ell.		
		\end{cases}
		\end{equation*}
	Therefore, a feasible decomposition of $X^{\text{RSD}(I_\ell)}$ over $\mathcal{M}^{\text{SD}}_{2\ell-1}$ can be constructed by giving matching~$M^t$ a weight of $\lambda^t=\sfrac{1}{\ell}$, for each $t = 1, \ldots, \ell$. 
	
	Combining both results, we obtain that 
	\begin{equation*}
	\lim_{\ell \to \infty}\frac{z(X^{\text{RSD}(I_\ell)})}{p^-(I_\ell)} = \lim_{\ell \to \infty}\frac{2\ell - 1}{\ell} = 2, 
	\end{equation*}
	which proves the theorem.
\end{document}